\newcommand\twoline[2]{\genfrac{\{}{\}}{0pt}{}{\strut#1}{\strut#2}}
\newcommand\R{\mathbf R}
\DeclareMathAlphabet{\mathsfsl}{OML}{cmm}{b}{it}
\renewcommand\r{\mathsfsl{r}}
\newcommand\restr{\mathord\upharpoonright}
\newcommand\down{{\downarrow}}
\newtheorem{theorem}{Theorem}
\newtheorem{lemma}[theorem]{Lemma}
\newtheorem{proposition}[theorem]{Proposition}
\newtheorem{corollary}[theorem]{Corollary}
\newtheorem{example}[theorem]{Example}
\newtheorem{problem}{Problem}
\newtheorem{polymatroidconjecture}{Sticky polymatroid conjecture}
\def\eps{\varepsilon}
\def\H{\mathbf{H}}
\begin{document}
\pagestyle{myheadings}
\title{\bf One-adhesive polymatroids}
\author{Laszlo Csirmaz}
\contact{Laszlo}{Csirmaz}{Central European University}{csirmaz@renyi.hu}
\date{\small\it Dedicated to the memory of Frantisek Mat\'u\v s}
\markboth{L.~Csirmaz}{One-adhesive polymatroids}

\maketitle

\par\vspace*{-18mm}\emph{Dedicated to the memory of Frantisek Mat\'u\v s}

\vspace{5mm}

\begin{abstract}
Adhesive polymatroids were defined by F.~Mat\'u\v s motivated by entropy
functions. Two polymatroids are adhesive if they can be glued together along
their joint part in a modular way; and are one-adhesive, if one of them has
a single point outside their intersection. It is shown that two polymatroids
are one-adhesive if and only if two closely related polymatroids have joint
extension. Using this result, adhesive polymatroid pairs on a five-element
set are characterized.
\end{abstract}
\keywords{polymatroid, amalgam, adhesive polymatroid,
entropy function, polyhedral cone}

\classification{05B35, 94A15, 52B12}

\section{Preliminaries}\label{sec:intro}

A {\em polymatroid $(f,M)$} is a non-negative, monotone and submodular
function $f$ defined on the collection of non-empty subsets of the finite
set $M$. Here $M$ is the {\em ground set}, and $f$ is the {\em rank function}.
The polymatroid is {\em integer} if all ranks are integer. An integer
polymatroid is a {\em matroid}, if the rank of singletons are either zero or
one. Matroids are combinatorial objects which generalize the properties of
linear dependence among a finite set of vectors. For an introduction to
matroids, see \cite{oxley}; and about polymatroids consult \cite{Lov,fmadhe}. The
rank function $f$ can be identified with a $(2^{|M|}-1)$-dimensional real vector,
where the indices are the non-empty subsets of $M$. 
In this paper the {\em distance} of two polymatroids $f$ and $g$ on the same ground
set is measured as the usual Euclidean distance of the corresponding vectors, and is
denoted as $\|f-g\|$.

Following the usual practice, ground sets and their subsets are denoted by
capital letters, their elements by lower case letters. The union sign
$\cup$ is frequently omitted as well as the curly brackets around singletons,
thus $Aab$ denotes the set $A\cup\{a,b\}$. For a function $f$
defined on the subsets of the finite set $M$ (such as the rank function of a
polymatroid) the usual information-theoretical abbreviations are used. Here
$I$, $J$, $K$ are disjoint subsets of the ground set:
\begin{align*}
   f(I,J|K) &= f(IK)+f(JK)-f(IJK)-f(K),\\
   f(I,J)   &= f(I,J|\emptyset) = f(I)+f(J)-f(IJ)-f(\emptyset),\\
   f(I|K)   &= f(IK)-f(K).
\end{align*}
When $f$ is a rank function, $f(\emptyset)$ is considered to be zero.
In cases when the function $f$ is clear from the context, even $f$ is
omitted. 
Additionally, the {\em Ingleton
expression} \cite{ingleton} is abbreviated as
$$
   f[I,J,K,L] = -f(I,J)+f(I,J|K)+f(I,J|L)+f(K,L).
$$
Observe that it is symmetrical for swapping $I$ and
$J$ as well as swapping $K$ and $L$. 

Vectors corresponding to polymatroids on the ground set $M$ form the pointed polyhedral cone 
$\Gamma_M$ \cite{yeungbook}. 
Its facets are the hyperplanes determined by the
basic submodular inequalities $(i,j|K)\ge 0$ with distinct $i,j\in M{-}K$
and $K\subseteq M$ ($K$ can be empty), and the monotonicity
requirements $(i|M{-}i)\ge 0$, see \cite[Theorem 2]{fmadhe}. Much less is known
about the extremal rays of this cone. They have been computed for ground sets
up to five elements \cite{studeny-kocka}, without indicating any structural
property.

\subsection{Entropic, linear, and modular polymatroids}

An important class of polymatroids describes the entropy structure of the
marginals of finitely many discrete random variables. Assume
$\{\xi_i:i\in M\}$ is a collection of (jointly distributed) random
variables. For $A\subseteq M$ let $\H(\xi_A)$ be the usual Shannon entropy 
of the marginal distribution $\xi_A = \{ \xi_i: i\in A\}$. The function $f(A) =
\H(\xi_A)$ is a polymatroid \cite{fuji}. Such polymatroids are called {\em
entropic}, and the collection of entropic polymatroids is
$\Gamma^*_M\subseteq \Gamma_M$ \cite{yeungbook}. The closure of $\Gamma^*_M$ 
(in the usual
Euclidean topology) is the collection of {\em almost entropic} or
{\em aent} polymatroids.
Studying
polymatroids is motivated partly by the difficult task of understanding the
entropic region as well as solving problems arising in secret sharing
\cite{padro1,seymour}, network coding \cite{acy}, and other areas.

Another important subclass is the linear polymatroids. 
$(f,M)$ is {\em linearly representable} if there is a vector
space $V$ over some finite field, linear subspaces $V_i\subseteq V$ for
each $i\in M$, such that $f(A)$ is the dimension of the linear subspace
spanned by the vectors in $\bigcup\{ V_i:i\in A\}$. Linearly
representable polymatroids are integer.
A polymatroid is {\em
linear} if it is in the conic hull of linearly representable polymatroids,
namely, it can be written as a non-negative linear combination of such
polymatroids. Linear polymatroids are almost entropic, see
\cite{entreg,M.twocon,padro}.

The polymatroid $(f,M)$ is {\em modular} if $f(I,J)=0$ for any two
disjoint non-empty subsets $I,J\subset M$, or, equivalently, if 
for all $A\subseteq M$ we have
$$
    f(A) = {\textstyle\sum}\, \{ f(i)\,:\, i\in A\}.
$$
Modular polymatroids are entropic and linear \cite{fmadhe}.

In matroid theory modularity refers to a different notion \cite{oxley}, which
will be called \emph{flat-modularity} here. $F\subseteq M$ is a \emph{flat} if
its rank is strictly smaller than that of any of its proper extensions. The 
polymatroid $(f,M)$ is \emph{flat-modular} if every pair $(F_1,F_2)$ of its flats
forms a modular pair, namely the submodularity holds with equality:
$$
   f(F_1)+f(F_2) = f(F_1\cap F_2)+f(F_1\cup F_2).
$$
Modular polymatroids are flat-modular, but the converse is not true in
general.

\medskip
For a subset $A\subset M$ define the function $\r_A$ on (non-empty)
subsets of $M$ as follows:
$$
    \r_A(I) =\min\{1,|A\cap I|\} =  \begin{cases} 1 & \mbox{if $A\cap I$ is not empty,}\\
                            0 & \mbox{otherwise}.
              \end{cases}
$$
Clearly $(\r_A,M)$ is a matroid and linearly representable over any vector
space. They are linearly independent and span the whole
$2^{|M|-1}$-dimensional space. This is immediate from the fact that the
linear combination
\begin{equation}\label{eq:r-coeffs}
    \sum_{B\subseteq A} \,(-1)^{1+|A{-}B|}\, \r_{\!N{-}B}(I)
\end{equation}
takes one at $A$, and zero anywhere else, see \cite[Lemma 3]{M.twocon}.

\smallskip

It is well known that all polymatroids on two or three 
elements are linear, moreover a polymatroid $f$ on the four element set
$abcd$ is linear if and only if it satisfies all six instances of the
Ingleton inequality:
\begin{align*}
   f[a,b,c,d]\ge 0, ~~~ f[a,c,b,d]\ge 0, ~~~ f[a,d,b,c]\ge 0,\\
   f[b,c,a,d]\ge 0, ~~~ f[b,d,a,c]\ge 0, ~~~ f[c,d,a,b]\ge 0,
\end{align*}
see \cite{matus-studeny}.
Linear polymatroids on a five element set can also be characterized by 
some finite set of linear inequalities \cite{dougherty5}. Polymatroids on
ground set of size five or less have the following {\em simultaneous
approximation property}, which will be used in Section \ref{sec:sticky-2}.
\begin{proposition}\label{prop:pm-approx}
Let $|M|\le 5$, and
let $f_1$ and $f_2$ be linear polymatroids on $M$.
For each positive $\eps$ and large enough vector space $V$ there is a
$\lambda>0$ and integer polymatroids $\ell_1$ and $\ell_2$ on $M$
linearly representable over $V$, such that $\|f_i-\lambda\ell_i\|<\eps$,
additionally $\ell_1(I)=\ell_2(I)$ whenever $f_1(I)=f_2(I)$ ($I\subseteq
M$).
\end{proposition}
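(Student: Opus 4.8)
The plan is to produce $\ell_1,\ell_2$ as integer roundings of rational multiples of linearly representable polymatroids, with the rounding organised so that none of the equalities $f_1(A)=f_2(A)$ is disturbed. I would begin from the fact that for $|M|\le5$ the set $L_M$ of linear polymatroids on $M$ is a polyhedral cone: for $|M|\le3$ it coincides with $\Gamma_M$; for $|M|=4$ it is the region cut out by the six Ingleton inequalities \cite{matus-studeny}; for $|M|=5$ it is polyhedral by \cite{dougherty5}. Since $L_M$ is a finitely generated cone which is, by definition, the conic hull of the linearly representable polymatroids, each of its extreme rays must be spanned by a single linearly representable polymatroid (if a point on such a ray were a positive combination of representable polymatroids not all parallel to it, the ray would not be extreme). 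Fix integer, linearly representable $r_1,\dots,r_N$ spanning these rays, so $L_M=\{\sum_ic_ir_i:c_i\ge 0\}$. Being finitely many, the $r_i$ can be realised simultaneously by subspace arrangements inside $V$ once $V$ is large enough, and then any non-negative \emph{integer} combination of them is again realisable in $V$ (take direct sums with multiplicities, which fit into $V$ provided $\dim V$ is large enough).

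Now let $E=\{A\subseteq M:f_1(A)=f_2(A)\}$ and introduce the cone
$$
   C=\Bigl\{(a,b)\in\R^N_{\ge 0}\times\R^N_{\ge 0}:\ {\textstyle\sum_i}(a_i-b_i)\,r_i(A)=0\ \text{for all }A\in E\Bigr\}.
$$
As the coefficients $r_i(A)$ are integers, $C$ is a \emph{rational} polyhedral cone. Choosing $a,b\ge 0$ with $f_1=\sum_ia_ir_i$ and $f_2=\sum_ib_ir_i$ (possible since $f_1,f_2\in L_M$), we have $(a,b)\in C$ because $\sum_i(a_i-b_i)r_i(A)=f_1(A)-f_2(A)=0$ on $E$. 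Decompose $(a,b)=\sum_rt_r(a^{(r)},b^{(r)})$ with all $t_r\ge 0$ over the finitely many integer generators of $C$, and set $P_r=\sum_ia^{(r)}_ir_i$, $Q_r=\sum_ib^{(r)}_ir_i$. Each $P_r$ and $Q_r$ is an integer polymatroid linearly representable over $V$; one has $f_1=\sum_rt_rP_r$ and $f_2=\sum_rt_rQ_r$; and --- this is the whole point --- $P_r(A)=Q_r(A)$ for every $A\in E$, since $(a^{(r)},b^{(r)})$ obeys the defining equations of $C$.

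Finally, round: for a large integer $q$ let $m_r$ be the nearest integer to $q\,t_r$, and put $\lambda=1/q$, $\ell_1=\sum_rm_rP_r$, $\ell_2=\sum_rm_rQ_r$. These are integer polymatroids, linearly representable over $V$ (for $V$ large enough); on $E$ they satisfy $\ell_1(A)=\sum_rm_rP_r(A)=\sum_rm_rQ_r(A)=\ell_2(A)$; and from $qf_1-\ell_1=\sum_r(qt_r-m_r)P_r$ (and the analogue with $Q_r$) one gets $\|f_i-\lambda\ell_i\|\le\bigl(\sum_r\|P_r\|+\|Q_r\|\bigr)/(2q)$, which drops below $\eps$ as soon as $q$ is large. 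The bound depends only on $f_1,f_2$ --- in fact only on $|M|$, since there are finitely many possible sets $E$ and hence finitely many possible families $\{P_r\},\{Q_r\}$ --- while the dimension of $V$ needed for the direct sums grows with $q$, that is, with $1/\eps$; this is how ``large enough'' should be read.

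The genuine difficulty is the coordination. Decomposing $f_1$ and $f_2$ separately into linearly representable summands is easy but destroys the equalities on $E$; the cone $C$ is engineered so that the pair $(f_1,f_2)$ decomposes into pairs $(P_r,Q_r)$ that already respect those equalities, after which the rounding is harmless. Beyond this, the argument uses only two inputs: the polyhedrality of $L_M$ for $|M|\le5$ (the deep instance being $|M|=5$, via \cite{dougherty5}), which together with the definition of $L_M$ forces its extreme rays to be linearly representable, and the closure of linear representability under non-negative integer combinations.
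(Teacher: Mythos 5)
Your argument is correct, and it rests on the same three pillars as the paper's own (very compressed) proof: polyhedrality of the cone of linear polymatroids for $|M|\le 5$, representability of its extreme rays over a common sufficiently large space, and closure of linear representability under non-negative integer combinations followed by a rounding step. Where you genuinely diverge is that the paper stops at listing these facts and declares that ``the claim follows,'' leaving entirely implicit the one point that is not routine --- how to force the two roundings to agree on $E=\{A:f_1(A)=f_2(A)\}$. Your rational polyhedral cone $C$ of coefficient pairs, whose integer generators produce pairs $(P_r,Q_r)$ that already satisfy $P_r(A)=Q_r(A)$ on $E$ and therefore survive the common rounding $m_r\approx qt_r$ unharmed, is an explicit coordination device the paper does not supply; in that respect your write-up is more complete than the published proof, and the rest of your bookkeeping (non-negativity of the $m_r$, the bound $\|f_i-\lambda\ell_i\|\le\bigl(\sum_r\|P_r\|+\|Q_r\|\bigr)/(2q)$, the finitely-many-$E$ uniformity remark) checks out. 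Two small caveats. First, your claim that the finitely many extreme-ray generators $r_i$ ``can be realised simultaneously inside $V$ once $V$ is large enough'' is a statement about the \emph{field} as much as about dimension --- a polymatroid representable over one finite field need not be representable over another --- so this step should be charged to \cite{dougherty5,matus-studeny}, exactly as the paper does, rather than to the finiteness of the list. Second, your reading of ``large enough $V$'' as growing with $q$, hence with $1/\eps$, is the right (and really the only tenable) interpretation of the statement, and it is consistent with how the proposition is invoked later, where $V$ is chosen after $\eps$.
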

\begin{proof}
On ground set $|M|\le5$ linear polymatroids form a polyhedral cone.
Moreover, for every large enough vector space $V$, extremal rays of this
cone contain polymatroids linearly representable over $V$, see
\cite{dougherty5,matus-studeny}. Non-negative
rational combinations of these polymatroids form a dense subset of
linear polymatroids. Let $\ell_1$ and $\ell_2$ be such combinations with
$\|f_i-\ell_i\|<\eps$. The linearly representable polymatroids $\r_A$ span
the whole linear space, thus there are rational coefficients 
$\alpha_A$ such that
$$
   \sum_{A\subseteq M}\,\alpha_A\,\r_A(I) =
   \begin{cases}
      \ell_1(I)-\ell_2(I) & \mbox{ if $f_1(I)=f_2(I)$},\\
      0                   & \mbox{ otherwise. }
   \end{cases}
$$
As $|\ell_1(I)-\ell_2(I)|<2\eps$ whenever $f_1(I)=f_2(I)$,
(\ref{eq:r-coeffs}) implies that all coefficients $\alpha_A$ have absolute
value smaller than $2^{|M|+1}\eps$. Using the notation
$\alpha^+=\max\{0,\alpha\}$ and $\alpha^-=\max\{0,-\alpha\}$, the
polymatroids
\begin{align*}
  &  \ell_1+{\textstyle\sum}_{A\subseteq M}\,\alpha^+_A\,\r_A \\
  &  \ell_2+{\textstyle\sum}_{A\subseteq M}\,\alpha^-_A\,\r_A
\end{align*}
are non-negative rational combinations of linearly representable
polymatroids; are equal whenever $f_1(I)=f_2(I)$; and are approximating
$f_1$ and $f_2$, respectively, better than $2^{2|M|+2}\eps$.

Finally, integer combinations of linearly 
representable polymatroids over the same vector space $V$ are linearly
representable, which implies the claim.
\end{proof}

\subsection{Amalgam and adhesive extension}\label{subsec:adhesive}

Let $M$, $X$, and $Y$ be disjoint sets. 
Polymatroids $f_X$ and $f_Y$ on the ground sets $M\cup X$ and $M\cup Y$,
respectively, with joint restriction on $M$,
have an {\em amalgam}, or {\em can be glued together}, if there is a
polymatroid $f$ on $M\cup X\cup Y$ extending both $f_X$ and $f_Y$
\cite{oxley}. This
extension is {\em modular} if, in addition, $X$ and $Y$ are independent over
$M$, that is, $f(X,Y|M)=0$. If $f_X$ and $f_Y$ have such a modular
extension $f$, then $f_X$ and $f_Y$ are {\em adhesive}, and $f$ is an {\em
adhesive extension}. Adhesive extensions were defined and studied by
F.~Mat\'u\v s in \cite{fmadhe}. The main observation is that restrictions of
an almost entropic polymatroid are adhesive \cite[Lemma 2]{fmadhe}. In this
paper we investigate adhesive extensions on their own right.

When speaking about amalgam, or adhesive extension, the polymatroids are
tacitly assumed to have the same restriction on the intersection of their
ground sets.

\smallskip

We have defined the amalgam of $f_X$ and $f_Y$ as a {\em polymatroid} extending
both $f_X$ and $f_Y$. The amalgam of two matroids is traditionally required to be a
matroid. It is an interesting problem to decide whether the two different
notions of amalgam coincide.
\begin{problem}\label{problem:1}\rm
Suppose the matroids $f_X$ and $f_Y$ on $M\cup X$ and $M\cup Y$,
respectively, have a polymatroid amalgam on $M\cup X\cup Y$.
Is it true that then they have a matroid amalgam as well?
\end{problem}
\noindent
If the joint extension is integer valued then it must be a matroid; and if
there is a joint extension at all, then there is one with rational values.

\smallskip

Whether two matroids have an amalgam is a combinatorial question; the
same question about polymatroids is a {\em geometrical} one. Polymatroids
$f_X$ and $f_Y$ have an amalgam if and only if the point $(f_X,f_Y)$ (merged
along coordinates corresponding to subsets of $M$) is in the {\em
coordinatewise projection} of the polymatroid cone $\Gamma_{MXY}$ to the
subspace with coordinates $I\subseteq MXY$ where $I\subseteq MX$ or
$I\subseteq MY$.
The projection is a polyhedral cone whose bounding
hyperplanes correspond to (homogeneous) linear inequalities on the projected
coordinates. Thus $f_X$ and $f_Y$ have an amalgam if and only if the
vector $(f_X,f_Y)$ satisfies all of these inequalities. While
theoretically simple, in practice it is unclear how to calculate
the facets of the projection efficiently.

The same reasoning applies to adhesive extension. Such an extension
satisfies the additional constraint $f(X,Y|M)=0$, thus the modular extensions
form a subcone of dimension one less: the intersection of
$\Gamma_{MXY}$ and the hyperplane $f(XM)+f(YM)-f(XYM)-f(Y)=0$. $f_X$ and
$f_Y$ have an adhesive extension if an only if the point $(f_X,f_Y)$ is in
the projection of this restricted cone.

\smallskip
The polymatroid $h$ is {\em sticky} if any two extensions
of $h$ have an amalgam. Flat-modular polymatroids are sticky, the proof in
\cite[Theorem 12.4.10]{oxley} works in the polymatroid case as well, but see
also \cite[Theorem 1]{fmadhe}. The ``sticky matroid conjecture'' asserts that all
sticky matroids are flat-modular \cite{bonin}. The same conjecture is stated here
for polymatroids.
\begin{polymatroidconjecture}
Sticky polymatroids are flat-modular.
\end{polymatroidconjecture}
Factors of sticky polymatroids are sticky, and the collection of sticky
polymatroids on a given ground set forms a closed cone, thus to settle the
above conjecture it is enough to consider polymatroidal extensions of a
matroid. Consequently, if the answer to Problem \ref{problem:1} is {\em yes}
and the sticky matroid conjecture is true, then so is the sticky polymatroid
conjecture.

\medskip

To state some of our results we need one more definition.
The polymatroid $(h,M)$ is {\em $k$-$\ell$-sticky}, if any two of its extensions
$(f_X,MX)$ and $(f_Y,MY)$ with $|X|\le k$ and $|Y|\le\ell$ have an amalgam.
A polymatroid is {\em $k$-sticky}, if it is $k$-$k$-sticky.
Sticky polymatroids on small ground sets are discussed in Sections
\ref{sec:sticky-2} and \ref{sec:sticky-3}.

\subsection{New polymatroids from old ones}\label{subsec:constructions}

Each polymatroid can be decomposed as a sum of a modular and a tight
polymatroid as described in Lemmas \ref{lemma:down} and 
\ref{lemma:tightening}; it is
a generalization of \cite[Lemma 2]{entreg}. Lemma \ref{lemma:excess}
discusses how one can extend a polymatroid adding a new element to the base set.
The method will be used in later sections to create several extensions.
Recall that $\r_A$ is the polymatroid defined by $\r_A(I)=\min\{1,|A\cap
I|\}$.
\begin{lemma}\label{lemma:down}
Let $(f,M)$ be a polymatroid and $A\subset M$. Suppose the real number $\lambda$ satisfies the
following conditions:
$$\begin{array}{r@{\;\le\;}ll}
\lambda & f(x,y|B)    &\mbox{for {\em different} $x,y\in A$ and all $B\subseteq
M{-}A$; and}\\[3pt]
\lambda & f(x|M{-}A)  &\mbox{for every $x\in A$.}
\end{array}$$
Then $(f-\lambda \r_A,M)$ is a polymatroid.
\end{lemma}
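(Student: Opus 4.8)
The plan is to verify directly that the candidate function $g = f - \lambda\r_A$ satisfies the three defining properties of a polymatroid: non-negativity, monotonicity, and submodularity. Monotonicity and submodularity are both local conditions — they can be checked on the facet-defining inequalities of $\Gamma_M$ listed in Section~\ref{sec:intro} — namely $g(i|M{-}i)\ge 0$ for each $i\in M$, and $g(i,j|K)\ge 0$ for distinct $i,j\in M{-}K$ and $K\subseteq M$. Non-negativity then follows, since a monotone function with $g(\emptyset)=0$ is automatically non-negative. So the whole proof reduces to checking these two families of inequalities for $g$, using that they already hold for $f$ and that the hypotheses on $\lambda$ are exactly tailored to the cases where $\r_A$ "costs" something.

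The key observation is that $\r_A$ is itself a matroid, hence satisfies all the submodular and monotonicity inequalities, but with equality in many of them; the point of the hypotheses is to control the cases where $\r_A(i,j|K)>0$ or $\r_A(i|M{-}i)>0$. First I would compute $\r_A(i,j|K)$ explicitly. Writing $B = K\setminus A$, one checks that $\r_A(i,j|K)$ equals $1$ precisely when $i,j\in A$ and $K\cap A=\emptyset$ (so that adding either $i$ or $j$ to $K$ newly "switches on" the indicator, but adding both together only switches it on once), and equals $0$ in every other case — either because $K$ already meets $A$, or because at most one of $i,j$ lies in $A$. Hence $g(i,j|K) = f(i,j|K) - \r_A(i,j|K)$ is either $f(i,j|K)\ge 0$ (the easy case) or $f(i,j|K) - 1$ in the case $i,j\in A$, $K\subseteq M{-}A$; but wait — here is where I should be careful, since the hypothesis is stated with a general real $\lambda$, not $\lambda=1$. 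Replaying the computation for $g = f-\lambda\r_A$: submodularity $g(i,j|K)\ge0$ is automatic unless $i,j\in A$ and $K\cap A=\emptyset$, i.e.\ $K\subseteq M{-}A$, in which case it reads $f(i,j|K)\ge\lambda$ — which is exactly the first hypothesis (taking $B=K$).

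Similarly, for monotonicity I would compute $\r_A(i|M{-}i)$: this is $1$ exactly when $i\in A$ and $A\cap(M{-}i)=\emptyset$, i.e.\ $A=\{i\}$ is a singleton contained in $A$... more precisely $\r_A(i|M{-}i)=\r_A(M)-\r_A(M{-}i)$, which is $1$ iff $i\in A$ and $(M{-}i)\cap A=\emptyset$, meaning $A=\{i\}$; and it is $0$ otherwise. So $g(i|M{-}i)\ge 0$ is automatic unless $A=\{i\}$, in which case it reads $f(i|M{-}i)\ge\lambda$, which is a special case of the second hypothesis. But I should double-check the boundary: when $|A|\ge 2$, is monotonicity really free? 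If $i\in A$ but $A\setminus\{i\}\ne\emptyset$, then $M{-}i$ still meets $A$, so $\r_A(M{-}i)=1=\r_A(M)$ and indeed $\r_A(i|M{-}i)=0$. So monotonicity on singleton-removal facets is covered. One still needs general monotonicity $g(J)\le g(JK)$ to conclude non-negativity, but that follows from the facet inequalities by the standard argument that $\Gamma_M$ is cut out by them. I expect the main (minor) obstacle to be bookkeeping: making sure the second hypothesis, $\lambda\le f(x|M{-}A)$ for $x\in A$, is genuinely needed — it is what guarantees non-negativity directly, via $g(x) = g(x|\emptyset)$ bounded below using $g(M{-}A\,x)\ge g(M{-}A)$ type inequalities, equivalently $f(x|M{-}A)-\lambda\ge 0$ since $\r_A(x|M{-}A)=1$ when $x\in A$ (here $M{-}A$ is disjoint from $A$, so $\r_A$ jumps). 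Assembling: every facet inequality of $\Gamma_M$ holds for $g$, hence $g\in\Gamma_M$, i.e.\ $(f-\lambda\r_A,M)$ is a polymatroid.
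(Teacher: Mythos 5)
Your proof is correct and follows essentially the same route as the paper: both arguments reduce to checking the local monotonicity and submodularity inequalities for $f-\lambda\r_A$ and observing that $\r_A$ contributes (a value of $\lambda$) exactly in the cases covered by the two hypotheses. The only cosmetic difference is that the paper verifies every one-element increment $f^*(x|C)\ge 0$ directly via the second hypothesis and submodularity of $f$, whereas you check only the facet inequalities $g(i|M{-}i)\ge 0$ and let the facet description of $\Gamma_M$ do the rest — both are valid.
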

\noindent
Observe that if $A$ has a single member $a$, then the first condition
holds vacuously, and the second condition simplifies to $\lambda \le
f(a|M{-}a)$.
\begin{proof}
The claim clearly holds when $\lambda\le 0$, so assume $\lambda>0$, and
let $f^*=f-\lambda \r_A$. If $I$ and $A$ are disjoint, then $f(I)-f^*(I)=0$,
in the other cases this difference is $\lambda$. One has to check the monotonicity
for the special case $f^*(Cx)-f^*(C)\ge 0$, $Cx\subseteq M$ only. This difference equals to
$f(Cx)-f(C)$ except when $A$ and $C$ are disjoint and $x\in A$. But then
\begin{align*}
   f^*(Cx)-f^*(C)&=f(Cx)-f(C)-\lambda \\
               {}&=f(x|C)-\lambda \ge f(x|M{-}A)-\lambda\ge 0
\end{align*}
by assumption.

To check submodularity, observe that $f^*(x,y|B)=f(x,y|B)$ except when $A$
and $B$ are disjoint and both $x$ and $y$ are in $A$. In the latter case
$f^*(x,y|B)=f(x,y|B)-\lambda$, which is non-negative by the fist assumption.
\end{proof}

Let $(f,M)$ be any polymatroid and $a\in M$. By the remark above,
$f-\lambda\r_a$ is a polymatroid whenever $\lambda\le f(a|M{-}a)$. 
Choosing $\lambda$ to be this maximal value,
the polymatroid $f-\lambda\r_a$ is denoted by
$f\down a$, and called {\em tightening of $f$ at} (or on) $a$. $f$ is
{\em tight at $a$} if $f=f\down a$, that is, if $f(a|M{-}a)=0$.
Note that
$(f\down a)\down a = f\down a$, thus $f\down a$ is tight at $a$; moreover 
$(f\down a)\down b = (f\down
b)\down a$. Thus one can define the {\em tight part of $f$} at
$A=\{a_1,\dots,a_k\}$ as $f\down a_1\down\cdots\down a_k$. $f$ is {\em tight
on $A$}, if $f=f\down A$, and is {\em tight} if $f=f\down M$. The next
lemma summarizes the properties of tightening used in this paper, see
\cite{entreg}.

\begin{lemma}\label{lemma:tightening}
Let $(f,M)$ be a polymatroid and $A\subseteq M$. \begin{itemize}
\item $f$ is tight on $A$ if and only if it is tight on all elements of $A$.
\item $f\down A$ is tight on $A$.
\item $f-f\down A = \sum_{a\in A}\,f(a|M{-}a)\cdot\r_a$ is a modular polymatroid.
\item $f\down M$ is tight, and $f=f\down M + (f-f\down M)$ is the unique
decomposition of $f$ into the sum of a tight and modular part. \qed
\end{itemize}
\end{lemma}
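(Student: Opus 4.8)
The plan is to reduce all four statements to a single closed-form description of the tightening operation. First I would establish that, for every $A\subseteq M$,
$$
  f\down A \;=\; f-\sum_{a\in A}\lambda_a\,\r_a,\qquad\text{where }\lambda_a:=f(a|M{-}a)\ge 0,
$$
so that in particular $f\down A$ does not depend on the order in which the elements of $A$ are tightened. The whole argument hinges on one small observation: for distinct $a,b\in M$ the matroid $\r_a$ takes the value $1$ on both $M$ and $M{-}b$, but the value $0$ on $M{-}a$. Hence if $g$ is any polymatroid and we pass to $g-\lambda_a\r_a=g\down a$, then $g(b|M{-}b)=g(M)-g(M{-}b)$ is unchanged for every $b\ne a$, while $g(a|M{-}a)$ drops to $0$. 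Consequently the ``amount'' $\lambda_b$ tightened at $b$ is unaffected by earlier tightenings at other elements, and an easy induction on $|A|$ gives the displayed identity; each intermediate function is a polymatroid by Lemma~\ref{lemma:down} applied to a singleton set $\{a\}$, whose first hypothesis is then vacuous and whose second holds with equality by the choice of $\lambda$. Note also that $\lambda_a\ge 0$ by monotonicity of $f$.

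Given this identity, the first three bullets are short. For the first bullet, $f$ is tight on $A$ exactly when $\sum_{a\in A}\lambda_a\r_a=0$; evaluating this vector at a singleton $\{b\}$ with $b\in A$ yields precisely $\lambda_b$, so, since every $\lambda_a\ge 0$, the sum vanishes if and only if $\lambda_a=0$ for all $a\in A$, i.e.\ if and only if $f(a|M{-}a)=0$ for every $a\in A$, which is tightness at each element of $A$. For the second bullet, the cancellation above gives, for $b\in A$, $(f\down A)(b|M{-}b)=\big(f(M)-f(M{-}b)\big)-\lambda_b=0$, so $f\down A$ is tight at every element of $A$, hence tight on $A$ by the first bullet (and $f\down M$ tight on $M$ is what ``$f\down M$ is tight'' means). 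For the third bullet, $f-f\down A=\sum_{a\in A}\lambda_a\r_a$ is a non-negative combination of the matroids $\r_a$, each of which is modular since $\r_a(B)=\sum_{i\in B}\r_a(i)$ is immediate from the definition; modular polymatroids are closed under non-negative linear combinations because the defining equalities $h(B)=\sum_{i\in B}h(i)$ are linear, and the combination is a polymatroid as a non-negative combination of polymatroids, so $f-f\down A$ is a modular polymatroid.

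The fourth bullet is the only part with genuine content. Existence is immediate from the second and third bullets with $A=M$: the identity $f=f\down M+(f-f\down M)$ exhibits $f$ as the sum of the tight polymatroid $f\down M$ and the modular polymatroid $f-f\down M$. For uniqueness, suppose $f=g+m$ with $g$ tight and $m$ modular. Modularity of $m$ gives $m(a|M{-}a)=m(M)-m(M{-}a)=m(a)$ by telescoping the modular formula, while tightness of $g$ gives $g(a|M{-}a)=0$; hence $\lambda_a=f(a|M{-}a)=m(a)$ for every $a\in M$. Using modularity once more, $m(B)=\sum_{a\in B}m(a)=\sum_{a\in B}\lambda_a=\sum_{a\in M}\lambda_a\r_a(B)$, so $m=\sum_{a\in M}\lambda_a\r_a=f-f\down M$, and therefore $g=f\down M$.

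The main obstacle I anticipate is the bookkeeping in the first step: one must be sure that the scalar removed at each tightening step is genuinely ``the same regardless of what has already been removed,'' which is exactly what the closed form and its order-independence encode; after that everything is linear algebra on the singleton coordinates. A secondary point to state carefully is the identity $m(a|M{-}a)=m(a)$ for modular $m$ used in the uniqueness argument — note it is $m(a)$, not $0$ — which is precisely where modularity (as opposed to tightness) enters.
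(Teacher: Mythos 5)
Your proof is correct. Note that the paper itself states this lemma without proof (the \qed is attached to the statement), so there is no ``paper's approach'' to compare against; your argument is a complete and sound verification of what the author treats as routine. The one idea that does all the work is your closed form $f\down A = f - \sum_{a\in A} f(a|M{-}a)\,\r_a$, justified by the observation that $\r_a(M)=\r_a(M{-}b)=1$ for $b\ne a$, so tightening at $a$ leaves every other deficiency $f(b|M{-}b)$ unchanged; this is also exactly what underlies the paper's unproved assertion that $(f\down a)\down b=(f\down b)\down a$. All four bullets then follow as you describe, and your uniqueness argument correctly isolates where modularity (giving $m(a|M{-}a)=m(a)$) rather than tightness enters. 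The only cosmetic remark: in the first bullet, non-negativity of the $\lambda_a$ is not actually needed, since evaluating $\sum_{a\in A}\lambda_a\r_a$ at the singleton $\{b\}$ already returns $\lambda_b$ exactly.
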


\medskip

In the last part of this section we investigate how to extend the
polymatroid $(f,M)$ to the ground set $Mx$ using the {\em excess
function} $e(A)=f(xA)- f(A)$ defined for all subsets $A\subseteq M$ (including
the empty set). In agreement with the previous notation,
$e(a,b|A)$ abbreviates $e(aA)+e(bA)-e(abA)-e(A)$, in particular,
$e(a,b)=e(a,b|\emptyset)=e(a)+e(b)-e(ab)-e(\emptyset)$.

\begin{lemma}\label{lemma:excess}
Suppose $x$ is not in the ground set $M$ of the polymatroid $f$. Extend
$f$ to the subsets of $Mx$ by $f_x(Ax) = f(A)+e(A)$. Then $f_x$ is a
polymatroid on $Mx$ if and only if the following conditions hold:
\begin{enumerate}
\item[\rm 1.] $e$ is non-negative and non-increasing: $e(A)\ge e(B) \ge 0$
for $A\subseteq B\subseteq M$;
\item[\rm 2.] $e(a|M{-}a)+f(a|M{-}a)\ge 0$ for all $a\in M$;
\item[\rm 3.] $e(a,b|A)+f(a,b|A)\ge 0$ for all $abA\subseteq M$.
\end{enumerate}
\end{lemma}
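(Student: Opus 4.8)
The plan is to use the facet description of the polymatroid cone recalled in Section~\ref{sec:intro}: a function on the non-empty subsets of $Mx$ is a polymatroid exactly when it satisfies every basic submodular inequality $(i,j|K)\ge 0$ (distinct $i,j\in Mx{-}K$, $K\subseteq Mx$) and every monotonicity inequality $(i|Mx{-}i)\ge 0$ ($i\in Mx$). Since $f$ itself is a polymatroid, the inequalities for $f_x$ that do not involve $x$ hold automatically, so the whole argument is a case analysis according to where $x$ lies among $i$, $j$, $K$.

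The bookkeeping is organised around one rewriting. With the convention $f_x(\emptyset)=0$, the defining equation $f_x(Ax)=f(A)+e(A)$ is the same as $e(A)=f_x(xA)-f_x(A)=f_x(x|A)$ for every $A\subseteq M$, the case $A=\emptyset$ reading $e(\emptyset)=f_x(x)$. Expanding the definitions from here, for pairwise disjoint $a,b,A\subseteq M$ one obtains
\begin{align*}
 f_x(a,b|A)&=f(a,b|A), & f_x(a,b|Ax)&=f(a,b|A)+e(a,b|A),\\
 f_x(a,x|A)&=e(A)-e(aA), & f_x(a|Mx{-}a)&=f(a|M{-}a)+e(a|M{-}a),
\end{align*}
together with $f_x(x|M)=e(M)$. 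A short check shows that every facet inequality of $f_x$ occurs on the left of one of these five lines: a basic submodular inequality avoiding $x$ is the first line; one with $x\in K$ becomes the second after writing $K=K'x$ (so $A=K'$, possibly empty); one with $x\in\{i,j\}$ is the third; the monotonicity inequality at $x$ is the last line, and the one at $a\in M$ is the fourth.

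Given the five identities, the equivalence is read off in both directions. For the ``if'' part, conditions 1--3 make every right-hand side non-negative: the first because $f$ is submodular, the second by condition~3, the third because $e$ is non-increasing, the fourth by condition~2, the fifth by condition~1; hence $f_x$ satisfies all facet inequalities and is a polymatroid. For the ``only if'' part, monotonicity and non-negativity of $f_x$ give $f_x(x|A)\ge 0$ for all $A$, while submodularity of $f_x$ gives $f_x(x|A)\ge f_x(x|B)$ whenever $A\subseteq B$; through $e(A)=f_x(x|A)$ these two statements are exactly condition~1. Condition~2 is the monotonicity inequality of $f_x$ at $a$, and condition~3 the submodular inequality $f_x(a,b|Ax)\ge 0$, both rewritten via the identities above.

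I expect no genuine obstacle here. The one point that needs care is that the case analysis really be exhaustive; in particular condition~3 has to be used with $A=\emptyset$ as well — which is exactly why $e$ is declared on all subsets of $M$, the empty set included — so that the basic submodular inequality $f_x(a,b|x)\ge 0$ is covered. Everything else is a routine expansion of the abbreviations.
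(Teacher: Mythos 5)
Your proposal is correct and follows the same route as the paper, whose entire proof is the remark that the claim is ``an easy case by case checking''; you have simply carried out that checking explicitly via the facet description of $\Gamma_{Mx}$, and your five identities together with the observation that condition~3 must be invoked with $A=\emptyset$ cover all cases correctly.
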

\begin{proof}
An easy case by case checking.
\end{proof}
As $e$ is non-increasing, $e(A|B)\le 0$; in particular $e(a|M{-}a)\le 0$ for
all $a\in M$. On the other hand, $e(a,b|A)$ can take both positive and negative
values even for the same excess function.

\begin{example}\label{example:excess-1}
Let $f$ be a polymatroid on $M$ and $\,0\le u,t$. Define the excess function
$e_x$ by
$$
   e_x(A) = \left\{\begin{array}{ll}
         u+t & \mbox{ if $A=\emptyset$,}\\
         u & \mbox{ otherwise}.
          \end{array}\right.
$$
If $t\le f(a,b)$ for all pairs $a,b\in M$, then $f_x$ is a polymatroid .
\end{example}
\begin{proof}
Conditions 1 and 2 of Lemma \ref{lemma:excess} trivially hold. As for
Condition 3,
$e_x(a,b|A)$ is zero except when $A=\emptyset$, and then $e_x(a,b)=-t$. Thus
it also holds by the assumption on $t$.
\end{proof}
\noindent
An easy calculation shows that for this extension $f_x$, for all pairs
$a,b\in M$ and non-empty $A\subseteq M{-}a$ we have $f_x(x,a|A)=0$, 
and $f_x(a,b|x)=f(a,b)-t$.
\begin{example}\label{example:excess-2}
Let $c\in M$ and $\,0\le u,t$. Define the excess function $e_x$ by
$$
    e_x(A) =\left\{\begin{array}{ll}
         u+t & \mbox{ if $A=\emptyset$ or $A=\{c\}$,}\\
         u & \mbox{ otherwise}.
          \end{array}\right.
$$
If $t\le f(a,b)$ and $t\le f(a,b|c)$ for all pairs $a,b\in M{-}c$,
then $f_x$ is a polymatroid.
\end{example}
\begin{proof}
Similar to the previous Example. Conditions 1 and 2 hold, moreover
$e_x(a,b|A)$ is either zero or $-t$, and the latter case holds when $A=\emptyset$
and $a,b\in M{-}c$, or when $A=\{c\}$. Thus in all cases Condition 3 holds
as well.
\end{proof}


\section{Adhesivity versus amalgam}\label{sec:adh-eq-amalgam}

As defined in Section \ref{subsec:adhesive}, polymatroids $f_X$ and $f_Y$ on
ground set $MX$ and $MY,$ respectively, have an {\em amalgam} if there is a
polymatroid
on $MXY$ extending both $f_X$ and $f_Y$. The same polymatroids are
{\em adhesive} if, in addition, they have a modular extension.
When $Y$ has a single element $y$, then the polymatroid on $My$ will be
denoted by $f_y$. In this special case adhesivity of $f_X$ and $f_y$ is
equivalent to the existence of the amalgam of closely related
polymatroids. Recall that $f_y$ is {\em tight on $y$} if $f_y(y|M)=0$,
and by tightening $f_y$ on $y$ one gets the (tight) polymatroid
$$
    f_y\down y = f_y - f_y(y|M)\cdot\r_y.
$$

\begin{theorem}\label{thm:amalgam-Xy}
Polymatroids $f_X$ and $f_y$ are adhesive if and only if $f_X$ and
$f_y\down y$ have an amalgam.
\end{theorem}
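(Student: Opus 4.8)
The plan is to fix the single nonnegative number $t=f_y(y\mid M)$, so that by definition $f_y\down y=f_y-t\,\r_y$ and — this is the point that makes everything work — $(f_y\down y)(y\mid M)=0$. The whole argument is then bookkeeping around the two constructions $g=f-t\,\r_y$ and $f=g+t\,\r_y$, each of which moves a polymatroid on $MXy$ to another one by changing only the values on sets that contain $y$.

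For the ``only if'' direction I would start from a modular extension $f$ of $f_X$ and $f_y$ on $MXy$ and set $g=f-t\,\r_y$. Writing the excess $e(A)=f(Ay)-f(A)$ for $A\subseteq MX$, one checks $f(X,y\mid M)=e(M)-e(MX)$, so the modularity hypothesis is exactly $e(MX)=e(M)$; since $e(M)=f(y\mid M)=t$ this says $f(y\mid MX)=t$. Now Lemma~\ref{lemma:down} applied to the singleton $A=\{y\}$ (where its first condition is vacuous and its second reads $\lambda\le f(y\mid MX)$) gives that $g$ is a polymatroid for $\lambda=t$. Restricting to subsets of $MX$, on which $\r_y$ vanishes, shows $g$ extends $f_X$; restricting to subsets of $My$ and using $(f_y\down y)(Ay)=f_y(Ay)-t$ shows $g$ extends $f_y\down y$. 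Hence $g$ is an amalgam.

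For the ``if'' direction I would start from an amalgam $g$ of $f_X$ and $f_y\down y$ and set $f=g+t\,\r_y$; since $\r_y$ is a matroid and $t\ge 0$, $f$ is automatically a polymatroid, and it extends $f_X$ (on $MX$, $\r_y$ vanishes) and $f_y$ (on a set $Ay\subseteq My$ the added $t$ cancels the tightening, and on subsets of $M$ the two functions already agree). The one nontrivial point is $f(X,y\mid M)=0$. With $e_g(A)=g(Ay)-g(A)$ one has $f(X,y\mid M)=e_g(M)-e_g(MX)$, and $e_g(M)=g(y\mid M)=(f_y\down y)(y\mid M)=0$, while $0\le e_g(MX)\le e_g(M)$ because $e_g$ is nonnegative (monotonicity of $g$) and nonincreasing (submodularity of $g$). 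Therefore $e_g(MX)=0=e_g(M)$ and $f$ is modular.

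The only real point — the ``obstacle'', such as it is — is to notice that tightening $f_y$ at $y$ is exactly the adjustment that, once modularity holds, makes the tightening amount $t$ coincide with $f(y\mid MX)$ on the nose, so that Lemma~\ref{lemma:down} barely applies in the first direction; and, symmetrically, that it is precisely what forces the excess $e_g$ to vanish at $M$ and hence, being monotone nonincreasing, at $MX$ as well, so that modularity comes for free in the second direction. Everything else reduces to checking whether $y$ belongs to the set being evaluated.
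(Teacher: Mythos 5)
Your proof is correct and follows essentially the same route as the paper: in the forward direction the modularity identity $f(y\mid MX)=f_y(y\mid M)$ is exactly what makes subtracting $t\,\r_y$ coincide with tightening the extension at $y$, and in the reverse direction tightness of $f_y\down y$ at $y$ together with submodularity and monotonicity forces $g(X,y\mid M)=0$ before adding $t\,\r_y$ back. The only cosmetic difference is that you route the verifications through the excess function and Lemma~\ref{lemma:down}, where the paper argues directly with the rank values.
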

\begin{proof}
First let $g$ be the modular extension of $f_X$ and $f_y$, that is
$g(X,y|M)=0$. This equality rewrites to
$$
   g(y|MX) = g(XMy)-g(MX)=g(My)-g(M)=f_y(My)-f_y(M)=f_y(y|M).
$$
Let $g^*=g\down y$. The above equality means that restricting $g^*$ to $My$
one gets $f_y\down y$, and, as $g$ and $g^*$ on $MX$ are the same, 
restricting $g^*$ to $MX$ one gets
$f_X$. Consequently $g^*$ is the required amalgam of $f_X$ and $f_y\down y$.

Conversely, let $g^*$ be an amalgam of $f_X$ and $f_y\down y$. Then using
that $f_y\down y$ is tight on $y$, 
$g^*(My)=f_y\down y(My) = f_y\down y(M) = g^*(M)$, thus
$$
  g^*(XMy)-g^*(XM)\le g^*(My)-g^*(M)=0,
$$
which means that $g^*(X,y|M)=0$. Let
$g=g^*+\lambda\r_y$ with $\lambda=f_y(y|M)$. Then $g$ extends $f_X$ (as
$g\restr MX=g^*\restr MX=f_X$), and $f_y$ (as $g\restr My=
g^*\restr My +\lambda\r_y = (f_y-\lambda\r_y)+\lambda\r_y$). Finally,
$g(X,y|M)=g^*(X,y|M)=0$, as required.
\end{proof}

The last step in the proof works in a more general setting.

\begin{proposition}\label{prop:tight-XY}
Suppose $f_X\down X$ and $f_Y\down Y$ have an amalgam. Then $f_X$ and $f_Y$
have an amalgam as well. 
\end{proposition}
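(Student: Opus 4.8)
The plan is to mimic the closing paragraph of the proof of Theorem~\ref{thm:amalgam-Xy}: starting from an amalgam of the two tightened polymatroids, I would recover an amalgam of $f_X$ and $f_Y$ by adding back the modular parts that the tightenings removed.

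First I would describe those modular parts concretely. By Lemma~\ref{lemma:tightening}, $m_X:=f_X-f_X\down X$ is a modular polymatroid on $MX$; and since tightening on an element of $X$ changes $f_X$ only on sets containing that element, $f_X\down X$ and $f_X$ agree on all subsets of $M$, so $m_X$ vanishes on $M$, in particular on every singleton of $M$. A modular polymatroid is determined by its values on singletons, hence
$$
m_X=\sum_{x\in X}\lambda_x\,\r_x,\qquad \lambda_x:=m_X(x)\ge0,
$$
where $\r_x$ is the rank-one matroid on $MX$; likewise $f_Y-f_Y\down Y=\sum_{y\in Y}\mu_y\,\r_y$ with $\mu_y\ge0$. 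As a by-product, $f_X\down X$ and $f_Y\down Y$ still carry the common restriction $f_X\restr M=f_Y\restr M$, so talking about their amalgam is meaningful.

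Next I would take an amalgam $g^*$ of $f_X\down X$ and $f_Y\down Y$ on $MXY$, furnished by the hypothesis, and set
$$
g=g^*+\sum_{x\in X}\lambda_x\,\r_x+\sum_{y\in Y}\mu_y\,\r_y ,
$$
now reading each $\r_x,\r_y$ as a matroid on $MXY$. Since $\lambda_x,\mu_y\ge0$ and the polymatroids on a fixed ground set form a cone, $g$ is a polymatroid. It then remains to identify the restrictions: for $I\subseteq MX$ one has $I\cap Y=\emptyset$, hence $g(I)=g^*(I)+\sum_{x\in X\cap I}\lambda_x=(f_X\down X)(I)+m_X(I)=f_X(I)$, i.e.\ $g\restr MX=f_X$; symmetrically $g\restr MY=f_Y$. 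Thus $g$ is the desired amalgam.

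I do not expect a genuine obstacle here. The only point that wants a sentence of justification is the identification $m_X=\sum_{x}\lambda_x\r_x$ — that the modular part removed in passing to $f_X\down X$ is supported on $X$ — and this is precisely what guarantees that the term added back to $g^*$ over $MXY$ restricts on $MX$ to the term removed from $f_X$ (and likewise for $Y$). Everything else is the cone property of polymatroids together with the elementary fact that the $\r_x$, $\r_y$ vanish on the subsets of $MY$, resp.\ $MX$, that do not meet them.
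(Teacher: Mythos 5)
Your argument is correct and coincides with the paper's own one-line proof, which simply adds $(f_X-f_X\down X)+(f_Y-f_Y\down Y)$ back to the amalgam of the tightened polymatroids; you have merely made explicit the (true and useful) observation that these modular parts are $\sum_{x\in X}\lambda_x\r_x$ and $\sum_{y\in Y}\mu_y\r_y$, so that they extend unambiguously to $MXY$ and restrict correctly. No further comment is needed.
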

\begin{proof}
If $g$ is an amalgam of $f_X\down X$ and $f_Y\down Y$, then
$g+(f_X-f_X\down X) + (f_Y-f_Y\down Y)$ is an amalgam of $f_X$ and $f_Y$.
\end{proof}

In particular, to show that $f$ is sticky, it is enough to consider
extensions $f_X$ and $f_Y$ which are tight on $X$ and $Y$, respectively. The
condition stated in Proposition \ref{prop:tight-XY} is sufficient but not 
necessary. Polymatroids $f_x$ and $f_y$ in
Example \ref{example:ex1} have an amalgam but are not adhesive. Thus, by
Theorem \ref{thm:amalgam-Xy}, $f_x\down x$ and $f_y\down y$ have no
amalgam.


\section{One-element extensions}\label{sec:sticky-2}

This section starts with an alternative proof for a result of F.~Mat\'u\v s 
\cite{fmadhe} which claims, using our terminology, that polymatroids on two
element sets are 1-sticky. A similar proof to this one will be given for
Theorem \ref{thm:nonsticky-2}.
Theorem \ref{thm:amalgam3} gives a sufficient and necessary condition for
a pair of one-element extensions of a polymatroid on three elements 
to have an amalgam. 
Using Theorem \ref{thm:amalgam-Xy}, this
is turned into sufficient and necessary conditions for such
polymatroid pairs to be adhesive, which, in turn, yields new 5-variable 
non-Shannon entropy inequalities stated in Corollary \ref{corr:5ineq}.

The section concludes with several examples. The first one specifies two
linearly representable (entropic) polymatroids which have an amalgam, but
are not adhesive. Thus there are two linearly representable polymatroids
which have a polymatroid extension, but no almost entropic (or linear)
extension. Finally, two general examples are presented for 1-sticky and not
1-sticky polymatroids on three elements.

\def\matusthm{\cite[Corollary 2]{fmadhe}}
\begin{theorem}[\matusthm]\label{thm:1-1-sticky}
All Polymatroids $f_x$ and $f_y$ on the ground sets $abx$ and $aby$ with
common restriction to $ab$
are adhesive. In particular, such polymatroids have 
an amalgam, thus every polymatroid on a two element set is 1-sticky.
\end{theorem}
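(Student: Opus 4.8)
The plan is to reduce the assertion to a concrete gluing problem and then solve that problem by hand. By Theorem \ref{thm:amalgam-Xy}, $f_x$ and $f_y$ are adhesive precisely when $f_x$ and $f_y\down y$ have an amalgam; and by Proposition \ref{prop:tight-XY} it suffices that $f_x\down x$ and $(f_y\down y)\down y=f_y\down y$ have an amalgam. Writing $h$ for the common restriction to $ab$, it is therefore enough to prove the following: any two one-element extensions of $h$, say $f_x$ on $abx$ tight on $x$ and $f_y$ on $aby$ tight on $y$, have an amalgam. A modular extension being in particular an extension, the amalgam and 1-stickiness statements of the theorem then follow; moreover, when both extensions are tight every amalgam is automatically modular (if $g$ is an amalgam then $0\le g(x|aby)\le g(x|ab)=f_x(x|ab)=0$, so $g(abxy)=g(aby)=h(ab)$ and hence $g(x,y|ab)=0$), so the amalgam produced is in fact an adhesive extension.

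To build the amalgam I would adjoin $y$ to $f_x$ by means of Lemma \ref{lemma:excess}, which amounts to choosing the excess function $e(A)=g(Ay)-f_x(A)$ for every $A\subseteq abx$. For $A\subseteq ab$ the value is forced by the requirement $g\restr aby=f_y$: one needs $e(A)=f_y(y|A)$, and these values are non-negative and non-increasing in $A$ by monotonicity and submodularity of $f_y$, with $e(ab)=f_y(y|ab)=0$ by tightness. Tightness of both extensions forces $g(abxy)=h(ab)$ as above, that is $e(abx)=0$. The three remaining values $e(x)$, $e(ax)$, $e(bx)$ are free, subject only to Conditions~1--3 of Lemma \ref{lemma:excess}.

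Spelling those conditions out, each of Conditions~1--3 becomes a linear inequality in $e(x),e(ax),e(bx)$ whose remaining terms are ranks of $f_x$ and $f_y$. Several of them hold automatically, using submodularity of $f_x$ and $f_y$ together with the two tightness hypotheses; the rest confine $e(ax)$ to an interval, $e(bx)$ to an interval, and then $e(x)$ to an interval depending on the first two choices. The core of the argument is checking that all three intervals are non-empty. This reduces to a short list of inequalities among the ranks of $f_x$ and $f_y$; the non-obvious ones compare quantities such as $f_y(ay)$, $f_x(a,b|x)$ and $h(ab)$, and each follows from one submodular step together with tightness --- for instance $f_y(ay)\le f_y(aby)=h(ab)$, where the equality uses that $f_y$ is tight on $y$. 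Once feasibility is established, Lemma \ref{lemma:excess} returns a polymatroid on $abxy$ extending both $f_x$ and $f_y$, which completes the proof.

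The step I expect to be the genuine obstacle is exactly this feasibility check, and it is also the place where the hypothesis $|M|=2$ enters: with three or more shared elements the corresponding system of constraints on the excess values becomes far more entangled and may fail to be solvable --- in line with the failure of 1-stickiness on larger ground sets shown later in the paper --- so the argument must really use that only $a$ and $b$ are common to the two ground sets.
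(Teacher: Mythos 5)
Your route is genuinely different from the paper's. The paper proves adhesivity by approximating $(f_x,f_y)$ with a pair of integer polymatroids linearly representable over a common vector space (Proposition \ref{prop:pm-approx}), gluing the two subspace arrangements along their common part to get an adhesive extension, and then using that the adhesive cone is closed. You instead reduce, via Theorem \ref{thm:amalgam-Xy} and Proposition \ref{prop:tight-XY}, to producing an amalgam of the tightened extensions $f_x\down x$ and $f_y\down y$ by hand, through an excess function on $abx$. The reduction is correct, the forced values $e(A)=f_y(y|A)$ for $A\subseteq ab$, $e(ab)=e(abx)=0$ are right, and your observation that any amalgam of the tightened pair is automatically modular is a nice bonus (though redundant, since Theorem \ref{thm:amalgam-Xy} already converts the amalgam back into adhesivity). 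If carried through, this argument is more elementary than the paper's: it avoids the classification of linear polymatroids on small ground sets and the density/closedness argument entirely.

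The problem is that you stop exactly at the step you yourself identify as ``the genuine obstacle.'' The feasibility of the system for $e(x)$, $e(ax)$, $e(bx)$ is the entire mathematical content of the proof, and it is asserted rather than verified; a single sample inequality ($f_y(ay)\le f_y(aby)=h(ab)$) is not a substitute for the check. Moreover the structure is not quite ``three intervals'': the constraints are coupled, since Condition~3 of Lemma \ref{lemma:excess} applied to $(a,b|x)$ gives a lower bound on $e(ax)+e(bx)-e(x)$ while the constraints from $(a,x)$ and $(b,x)$ bound $e(ax)$ and $e(bx)$ from above by expressions involving $e(x)$. One must actually eliminate the variables (set $e(ax)$ and $e(bx)$ to their least upper bounds, take $e(x)=\max(0,\,f_y(ay)-f_x(ax),\,f_y(by)-f_x(bx),\,f_y(y)-f_x(x),\,f_x(a|x)-f_y(a|y),\,f_x(b|x)-f_y(b|y))$, and run through roughly half a dozen cases according to which upper bound is active). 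I checked that every resulting inequality does follow from one to three Shannon inequalities for $f_x$ or $f_y$ together with the two tightness hypotheses --- for instance the case pairing $C-X_a$ with $C-X_b$ reduces to $f_x(ab|x)\ge f_y(ay)-f_x(ax)$, which needs both $f_y(ay)\le h(ab)$ and $f_x(a|x)\ge 0$ --- so your plan does succeed, but as written the proof is incomplete at its only substantive point.
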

\begin{proof}
As discussed in Section \ref{subsec:adhesive}, adhesive polymatroid pairs
$(f_x,f_y)$ form a polyhedral cone. Consequently, $(f_x,f_y)$ is adhesive
if and only if $(\lambda f_x,\lambda f_y)$ is adhesive for some (or all)
positive $\lambda$. The adhesive cone is closed, thus to show that a
particular pair $(f_x,f_y)$ is
adhesive, it is enough to find, for each positive $\eps$, some adhesive pair
$(\ell_x,\ell_y)$ such that $\|f_x-\lambda\ell_x\|<\eps$, and
$\|f_y-\lambda\ell_y\|<\eps$. In this particular case $\ell_1$ and $\ell_2$
will be the linearly representable polymatroids guaranteed by
Proposition \ref{prop:pm-approx}. Thus $\ell_1$ and $\ell_2$ are represented
over the same vector space $V$,
$\lambda\ell_x$ and $\lambda\ell_y$ are 
$\eps$-close to $f_x$ and $f_y$, respectively, and the linear subspaces
in both representations corresponding to
subsets of $\{ab\}$ have the same dimensions: $\ell_x(a)=\ell_y(a)$,
$\ell_x(b)=\ell_y(b)$ and $\ell_x(ab)=\ell_y(ab)$ as these
equalities are true for the polymatroids $f_x$ and $f_y$. To conclude
the claim of the theorem it is enough to show that $(\ell_x,\ell_y)$ is an
adhesive pair.

The dimensions of subspaces spanned by $V_a$, $V_b$, and $V_a\cup V_b$ are
the same in both representations. Choose a base in the first representation
which can be partitioned to $B^x_x\cup B^x_a \cup B^x_b\cup B^x_{ab}$ such
that $\ell_x(a)=|B^x_a\cup B^x_{ab}|$, $\ell_x(b)=|B^x_b\cup B^x_{ab}|$, and
$\ell_x(ab)=|B^x_a\cup B^x_b\cup B^x_{ab}|$, and similarly for $\ell_y$.
Identify $B^x_a$ and $B^y_a$, $B^x_b$ and $B^y_b$, $B^x_{ab}$ and
$B^y_{ab}$, and take the vector space with base $B^x_x\cup B_a\cup B_b\cup
B_{ab}\cup B^y_y$ (that is, glue the representations of $\ell_x$ and
$\ell_y$ along their common part). It will be a linear representation of a
polymatroid on $abxy$, where $x$ and $y$ are independent given $ab$.
Consequently $\ell_x$ and $\ell_y$ have an adhesive extension, which
concludes the proof.
\end{proof}

Now we turn to the case of one-point extensions of polymatroids on
three-element sets. If not mentioned otherwise, all polymatroids in the rest
of this section are extensions of a fixed polymatroid on $M=\{a,b,c\}$.

\begin{theorem}\label{thm:amalgam3}
Polymatroids $f_x$, $f_y$ on the ground sets $abcx$ and $abcy$ have an amalgam if 
and only if the following eight inequalities and their permutations
(permuting $a,b,c$ and $x,y$) hold, where either the top or the bottom expression
is chosen from all three pairs in curly brackets: 
\begin{align}\label{eq:amalgam3}
  & ~~~ f_x(a,x|c)+f_x(a,b|x)+f_y(a,b|y)+f_y(c,y)+{}\\
  &+\twoline{f_x(b,x|ac)}{f_y(b,y|ac)} + \twoline{f_x(c,x|ab)}{f_y(c,y|ab)} + 
    2\twoline{f_x(x|abc)}{f_y(y|abc)} \ge f_x(a,b).\nonumber
\end{align}
\end{theorem}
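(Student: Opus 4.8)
The plan is to use the projection characterization of amalgams given in Section~\ref{subsec:adhesive}: $f_x$ and $f_y$ have an amalgam if and only if the merged point $(f_x,f_y)$ lies in the coordinatewise projection of $\Gamma_{abcxy}$ onto the coordinates indexed by subsets of $abcx$ or $abcy$. So the theorem amounts to identifying one facet of this projection cone (plus its orbit under the symmetry group permuting $\{a,b,c\}$ and swapping $x\leftrightarrow y$), and checking that these inequalities together with the polymatroid inequalities on $f_x$ and $f_y$ separately already cut out the whole projection. I would prove the two directions separately.

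For necessity, I would start from an arbitrary polymatroid $g$ on $abcxy$ extending both $f_x$ and $f_y$, and exhibit inequality~\eqref{eq:amalgam3} as a nonnegative combination of basic submodular and monotonicity inequalities for $g$, after which the $g$-terms that only involve subsets of $abcx$ or of $abcy$ get rewritten in terms of $f_x$ and $f_y$. Concretely, one wants a Shannon-type certificate: a sum of terms of the form $\lambda\, g(i,j|K)\ge 0$ and $\mu\, g(i|M{-}i)\ge 0$ that telescopes to the claimed inequality. The presence of the "choose one of two" brackets $\twoline{f_x(b,x|ac)}{f_y(b,y|ac)}$ signals that the certificate is not unique — whichever of the two one picks corresponds to bounding $g(b,x|ac)$ or $g(b,y|ac)$ from below by $0$ — and the coefficient $2$ on $\twoline{f_x(x|abc)}{f_y(y|abc)}$ suggests that $g(x|abcy)$ or $g(y|abcx)$ enters twice, or that a submodular term $g(x,y|\cdot)$ is split. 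Finding this explicit linear combination is the computational heart of the necessity direction; it is essentially a search over small rational coefficients, guided by the requirement that all $g$-coordinates mentioning both an $x$-set and a $y$-set (which are \emph{not} determined by $f_x,f_y$) must cancel.

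For sufficiency, I would assume $(f_x,f_y)$ satisfies~\eqref{eq:amalgam3} and all its permutations (and of course that $f_x,f_y$ are genuine polymatroids with common restriction to $abc$) and construct an explicit amalgam $g$ on $abcxy$. A natural strategy, in the spirit of Lemma~\ref{lemma:excess}, is to first extend $f_x$ to $abcxy$ by adding the element $y$ via a suitable excess function $e$ on subsets of $abcx$, chosen so that the restriction to $abcy$ comes out equal to $f_y$; the constraints of Lemma~\ref{lemma:excess} on $e$ then translate exactly into inequalities on $f_x$ and $f_y$, and one checks these are implied by~\eqref{eq:amalgam3} and its permutations. One should expect to have some freedom in choosing the "new" values $g(S)$ for $S$ meeting both $x$ and $y$ (there are $3$ such subsets of size $\le 2$ beyond $xy$ itself, plus larger ones), and the inequality~\eqref{eq:amalgam3} should pin down precisely when a feasible choice exists — likely one pushes these coordinates to an extreme (e.g.\ making $g(x,y|abc)=0$ or making certain conditional ranks vanish) and verifies the remaining submodularities reduce to the given inequalities.

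The main obstacle I anticipate is the sufficiency direction: turning~\eqref{eq:amalgam3} into an actual construction requires choosing the values of $g$ on the "mixed" subsets $\{x,y\}, \{x,y,a\},\dots,\{a,b,c,x,y\}$ so that \emph{all} basic submodular inequalities involving these subsets hold simultaneously, and it is not a priori obvious that satisfying the single family~\eqref{eq:amalgam3} suffices — one has to rule out other facets of the projection cone. I would handle this by a dimension/Fourier--Motzkin argument: eliminate the mixed coordinates one at a time from the system defining $\Gamma_{abcxy}$, and verify that every inequality produced is either a polymatroid inequality on $f_x$ or $f_y$ alone, or a permutation of~\eqref{eq:amalgam3}, or redundant. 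If a direct elimination is too unwieldy by hand, an alternative is to show that the cone cut out by~\eqref{eq:amalgam3} and the separate polymatroid constraints is pointed with a manageable set of extreme rays, and exhibit an amalgam for each extreme ray (amalgams being preserved under nonnegative combinations, as in Proposition~\ref{prop:tight-XY}).
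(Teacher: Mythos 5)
Your plan coincides with the paper's proof: necessity is shown by writing~(\ref{eq:amalgam3}) as a nonnegative combination of basic submodular and monotonicity inequalities on $abcxy$ in which all coordinates indexed by mixed subsets (those meeting both $x$ and $y$) cancel, and sufficiency by computing the facets of the coordinatewise projection of $\Gamma_{abcxy}$ --- the paper carries out your ``fallback'' elimination strategy via a computer enumeration of the $154$ extremal nonnegative combinations of the $27$ relevant facet rows that vanish on the eight dropped coordinates, then checks which of the resulting supporting hyperplanes are genuine facets. Your alternative sufficiency route (directly building the amalgam by an excess function) is not what the paper does and would be substantially harder to push through by hand, but the route you actually commit to is the paper's.
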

\begin{proof}
It is clear that all terms are defined over one of
the polymatroids $f_x$ and $f_y$. Also, these inequalities hold for
any polymatroid with ground set $abcxy$, which can easily be checked
using an automated entropy checker, thus they {\em must} hold when $f_x$ and
$f_y$ have an amalgam. Actually, inequalities in (\ref{eq:amalgam3}) written
in basic terms and rearranged, are equivalent to
\begin{align}
   & ~~~ (a,b|xy)+(x,y|a)+(x,y|b)+(c,y|x)+(a,x|cy)+{}\label{eq:amalgamproof}\\
   &+\twoline{(b,x|acy)}{(b,y|acx)}+\twoline{(c,x|aby)}{(c,y|abx)} +
    2\twoline{(x|abcy)}{(y|abcx)} \ge 0,\nonumber
\end{align}
which evidently holds for any polymatroid on five elements.

The sufficiency can be checked by the method indicated in Section
\ref{subsec:adhesive}. Polymatroids $f_x$ and $f_y$ determine 23 out of the
31 coordinates of the polymatroid on $N=abcxy$. The missing 
8 variables are indexed by subsets of the form $Axy$ with $A\subseteq
\{abc\}$.

The facets of the polymatroid cone $\Gamma_N$ are determined by the basic submodular
inequalities $(i,j|K)\ge 0$ and by the monotonicity requirements $(i|N{-}i)\ge
0$. The strong duality of linear programming says that the facet equations of the
projection are non-negative linear combinations of these inequalities in
which the combined coefficients of the projected (dropped) variables are zero. Let $\mathcal
M$ denote the matrix whose columns are indexed by the non-empty subsets of
$N$, and whose rows contain the coefficients of the bounding facets of
$\Gamma_N$ as discussed above. In each row there are two, three, or four non-zero entries only.
When $\mathcal M$ is restricted to the eight columns labeled by $xyA$,
27 different non-zero rows remain. Let $\mathcal M'$ be this 27 by 8 matrix.
\begin{table}[!bhtb]
\begin{center}\begin{tabular}{|cccccccc|ll|}
\hline
\rule{0pt}{11pt}$xy$&$axy$&$bxy$&$cxy$&$abxy$&$acxy$&$bcxy$&$abcxy$&&\\
\hline
\rule{0pt}{12pt}
-1 & 0 & 0 & 0 & 0 & 0 & 0 & 0 & $(x,y)$&\\[2pt]
1 & 0 & 0 & -1 & 0 & 0 & 0 & 0 & $(c,x|y)$&  $(c,y|x)$\\[2pt]
-1 & 1 & 1 & 0 & -1 & 0 & 0 & 0 & $(a,b|xy)$&\\[2pt]
0 & -1 & 0 & 0 & 0 & 0 & 0 & 0 & $(x,y|a)$&\\[2pt]
0 & 0 & -1 & 0 & 0 & 0 & 0 & 0 & $(x,y|b)$&\\[2pt]
0 & 0 & 0 & 1 & 0 & -1 & 0 & 0 & $(a,x|cy)$&  $(a,y|cx)$\\[2pt]
0 & 0 & 0 & 0 & 1 & 0 & 0 & -1 & $(c,x|aby)$&  $(c,y|abx)$\\[2pt]
0 & 0 & 0 & 0 & 0 & 1 & 0 & -1 & $(b,x|acy)$&  $(b,y|acx)$\\[2pt]
0 & 0 & 0 & 0 & 0 & 0 & 0 & 1 & $(x|abcy)$& $(y|abcx)$\\[2pt]
\hline
\end{tabular}\end{center}\vskip -5pt
\caption{A submatrix of $\mathcal M_{abcxy}$}\label{table:3-facets}
\end{table}
Table \ref{table:3-facets} shows some rows of $\mathcal M'$
with the corresponding facet equations (one or two). The matrix $\mathcal
M'$ can be constructed by hand, or by some interpretative computer program.
The next step is to extract the extremal non-negative linear combinations of
the rows which give zero sums for all eight columns. This can be done, e.g.,
by the freely available software packages {\sc Porta} \cite{porta}. The
result is 154 extremal non-negative linear combinations.
One of them is the combination taking all but the
first and last row from Table \ref{table:3-facets} once, and taking the last
row twice. Eight of the corresponding 32 facet combinations give the
inequalities in (\ref{eq:amalgamproof}), which, after rearranging the terms,
give the inequalities in (\ref{eq:amalgam3}).
The other $3\cdot 8$ combinations,
when one takes $(c,x|y)$ instead of $(c,y|x)$, or $(a,y|cx)$ instead of
$(a,x|cy)$, or both, yield supporting hyperplanes to the projected cone, but
not facets as they are consequences of the basic (Shannon) inequalities for
$abcx$ and $abcy$. In other words, these hyperplanes do not cut into the cones
$\Gamma_{abcx}$ and $\Gamma_{abcy}$.

All other bounding hyperplanes (inequalities) resulting from the remaining
153 extremal combinations were checked by an interpretative computer
program whether they are really facets of the projection. This search
resulted in the statement of the Theorem.
\end{proof}

\begin{corollary}\label{corr:3-adhesive}
Polymatroids $f_x$ and $f_y$ on the ground sets $abcx$ and $abcy$ are
adhesive if and only if the following four inequalities and their permutations
hold:
\begin{align}\label{eq:adhesive3}
  & ~~~ f_x(a,x|c)+f_x(a,b|x)+f_y(a,b|y)+f_y(c,y) +{}\\
  &+\twoline{f_x(b,x|ac)}{f_y(b,y|ac)} + \twoline{f_x(c,x|ab)}{f_y(c,y|ab)} \ge
f_x(a,b).\nonumber
\end{align}
\end{corollary}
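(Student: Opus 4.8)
The plan is to combine Theorem~\ref{thm:amalgam-Xy} with Theorem~\ref{thm:amalgam3}. By Theorem~\ref{thm:amalgam-Xy}, the pair $(f_x,f_y)$ is adhesive if and only if the tightened pair $(f_x\down x, f_y\down y)$ has an amalgam, and the latter is governed by the inequalities~\eqref{eq:amalgam3} (with all permutations of $a,b,c$ and of $x,y$) applied to $f_x\down x$ and $f_y\down y$ in place of $f_x$ and $f_y$. So the whole task reduces to a bookkeeping computation: substitute $g_x := f_x\down x = f_x - f_x(x|abc)\,\r_x$ and $g_y := f_y\down y = f_y - f_y(y|abc)\,\r_y$ into~\eqref{eq:amalgam3}, and check that after simplification one recovers exactly~\eqref{eq:adhesive3}.

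First I would record how each term transforms under tightening. Write $\mu = f_x(x|abc)$ and observe that $\r_x(I)=1$ for every non-empty $I\ni x$ and $\r_x(I)=0$ otherwise. Hence for the "mixed" terms involving $x$ we get $g_x(i,x|K) = f_x(i,x|K) - \mu$ when $K$ ranges over subsets of $abc$ (because $\r_x(iK)=\r_x(xK)=1$ while $\r_x(ixK)=\r_x(K)=0$, so the four-term alternating sum picks up $-\mu$), while for the purely $M$-side terms $g_x(i,j|K) = f_x(i,j|K)$ and $g_x(i,j) = f_x(i,j)$ since $\r_x$ vanishes on all subsets of $abc$. Finally the "top" term behaves as $g_x(x|abc) = f_x(x|abc) - \mu = 0$, which is the whole point of tightening. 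Symmetric formulas hold on the $y$-side with $\nu = f_y(y|abc)$.

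Next I would plug these into~\eqref{eq:amalgam3}. On the left-hand side, the term $f_x(a,x|c)$ becomes $g_x(a,x|c) = f_x(a,x|c) - \mu$; the term $f_x(a,b|x) = g_x(ab\,|x) $\dots here one must be slightly careful, since $f_x(a,b|x) = f_x(ax)+f_x(bx)-f_x(abx)-f_x(x)$, and under tightening each of $f_x(ax),f_x(bx),f_x(abx),f_x(x)$ drops by $\mu$, so the alternating sum is unchanged: $g_x(a,b|x)=f_x(a,b|x)$. Similarly $g_y(a,b|y)=f_y(a,b|y)$ and $g_y(c,y)=f_y(c,y)-\nu$. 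The two "twoline" terms each contribute $-\mu$ or $-\nu$ depending on which branch is chosen (they are of mixed form), and the last twoline term $2\,g_x(x|abc) = 2(\mu-\mu)=0$ simply disappears. The right-hand side $f_x(a,b)=g_x(a,b)$ is unchanged. Collecting: each instance of~\eqref{eq:amalgam3} for $g_x,g_y$ becomes exactly the corresponding instance of~\eqref{eq:adhesive3} for $f_x,f_y$ \emph{plus} a correction $-3\mu + 0$ on the $x$-dominant permutations (from $f_x(a,x|c)$, one of the twolines, and the other twoline, three mixed terms of weight one) minus $2\mu$ vanishing from the top term\dots so actually the accounting must be done carefully to see the corrections cancel; the cleanest route is to note that the left side of~\eqref{eq:amalgamproof} minus $2(x|abcy)$ is precisely what tightening leaves invariant, because $g$ being tight on $x$ forces $(x|abcy)=0$ in any amalgam, i.e. tightening removes exactly the two copies of the top facet. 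I expect this term-counting — verifying that the $\mu$'s and $\nu$'s from the several mixed terms combine with the two discarded top-facet copies to leave no residue — to be the only real obstacle, and it is a finite, purely mechanical check; once it closes, the equivalence of~\eqref{eq:adhesive3} for $(f_x,f_y)$ with~\eqref{eq:amalgam3} for $(f_x\down x, f_y\down y)$ gives the Corollary via Theorem~\ref{thm:amalgam-Xy}.
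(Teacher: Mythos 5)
Your strategy is exactly the paper's: apply Theorem~\ref{thm:amalgam-Xy} to reduce adhesivity of $(f_x,f_y)$ to the amalgam question for $(f_x\down x,f_y\down y)$, then feed the tightened pair into Theorem~\ref{thm:amalgam3}. But your bookkeeping of how the terms of~\eqref{eq:amalgam3} transform under tightening is wrong, and the error propagates into an ``accounting'' that you then cannot close and explicitly leave unresolved. Concretely: you assert $\r_x(iK)=1$ for $i\in\{a,b,c\}$, $K\subseteq abc$, but $\r_x(I)=1$ only when $x\in I$, and $x\notin iK$; so $\r_x(iK)=0$. Hence $\r_x(i,x|K)=\r_x(iK)+\r_x(xK)-\r_x(ixK)-\r_x(K)=0+1-1-0=0$, and the mixed terms $f_x(a,x|c)$, $f_x(b,x|ac)$, $f_x(c,x|ab)$ do \emph{not} drop by $\mu$ -- they are unchanged. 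Likewise $\r_y(c,y)=0+1-1-0=0$, so $g_y(c,y)=f_y(c,y)$, not $f_y(c,y)-\nu$. The clean way to see all of this at once is that $\mu\r_x$ is a \emph{modular} polymatroid, and every expression of the form $m(I,J|K)$ with $I,J$ disjoint and nonempty vanishes for a modular $m$; consequently tightening leaves every (conditional) mutual information term in~\eqref{eq:amalgam3} invariant, including the right-hand side $f_x(a,b)$. The only terms affected are the pure conditional entropies $(x|abc)$ and $(y|abc)$, for which $\r_x(x|abc)=1$, so they become $0$ after tightening.

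With that correction there is nothing left to cancel: each instance of~\eqref{eq:amalgam3} evaluated at $(f_x\down x,f_y\down y)$ is literally the corresponding instance of~\eqref{eq:adhesive3} evaluated at $(f_x,f_y)$, with the term $2\twoline{(x|abc)}{(y|abc)}$ deleted because it is zero. This is precisely the paper's (one-line) argument. As submitted, though, your proof does not stand: the spurious $-\mu$ and $-\nu$ corrections make the two inequality systems look inequivalent, and your closing paragraph concedes that you have not verified the cancellation, instead gesturing at a heuristic about ``removing two copies of the top facet.'' You need to redo the $\r_x$-evaluations correctly (or invoke modularity of $\r_x$) to turn this into a complete proof.
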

\begin{proof}
By Theorem \ref{thm:amalgam-Xy}, $f_x$ and $f_y$ are adhesive if and only if
$f_x\down x$ and $f_y\down y$ have an amalgam. All terms in
(\ref{eq:amalgam3})
are the same for $f_x$ and $f_x\down x$ ($f_y$ and
$f_y\down y$) except for $(f_x\down x)(x|abc)=0$ and 
$(f_y\down y)(y|abc)=0$.
\end{proof}

\begin{corollary}\label{corr:5ineq}
The following are four five-variable non-Shannon information inequalities, that
is, they hold in every entropic polymatroid on  at least five elements:
\begin{align*}
  & ~~~ (a,x|c)+(a,b|x)+(a,b|y)+(c,y) +{}\\
  &+\twoline{(b,x|ac)}{(b,y|ac)} + \twoline{(c,x|ab)}{(c,y|ab)} \ge
(a,b).
\end{align*}
\end{corollary}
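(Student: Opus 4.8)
The plan is to deduce Corollary \ref{corr:5ineq} from Corollary \ref{corr:3-adhesive} by exploiting the fact (stated in Section \ref{subsec:adhesive}, \cite[Lemma 2]{fmadhe}) that any two restrictions of an almost entropic polymatroid are adhesive. Concretely, suppose $g$ is an entropic polymatroid on a ground set containing the five distinct elements $a,b,c,x,y$. Restrict $g$ to $abcx$ to obtain $f_x$, and restrict $g$ to $abcy$ to obtain $f_y$; these share the common restriction $g\restr abc$, so the hypotheses of Corollary \ref{corr:3-adhesive} are met. Since $g$ itself is an almost entropic (indeed entropic) polymatroid on $abcxy$ extending both $f_x$ and $f_y$, and moreover $g$ realizes $x$ and $y$ as marginals of the \emph{same} underlying distribution, the pair $(f_x,f_y)$ is adhesive — in fact $g$ need not satisfy $g(x,y|abc)=0$, but adhesivity only asks for the \emph{existence} of a modular extension, and $\cite[Lemma 2]{fmadhe}$ supplies one from the almost entropic structure. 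Therefore inequality \eqref{eq:adhesive3} holds with these $f_x$ and $f_y$, and since every term of \eqref{eq:adhesive3} is an expression evaluated at subsets lying entirely within $abcx$ or entirely within $abcy$, each such term equals the corresponding expression computed in $g$; the inequality displayed in Corollary \ref{corr:5ineq} is exactly \eqref{eq:adhesive3} with the subscripts $f_x,f_y$ erased. The same argument applied to each permutation of $a,b,c$ and of $x,y$ gives all the permuted inequalities, completing the proof for entropic $g$.

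The one subtlety worth spelling out is why $(f_x,f_y)$ is genuinely adhesive rather than merely possessing an amalgam. Here I would invoke the cited result directly: if $\H$ is the entropy polymatroid of jointly distributed variables $\{\xi_i\}$ on a ground set $N\supseteq abcxy$, then for disjoint $U,V\subseteq N$ the restrictions $\H\restr U$ and $\H\restr V$ always admit an adhesive extension, obtained by taking a copy of the conditional distribution of $\xi_V$ given $\xi_{U\cap V}$ independent of $\xi_{U}$ over $\xi_{U\cap V}$ — this is precisely Mat\'u\v s's construction. Taking $U=abcx$ and $V=abcy$ yields the adhesive pair we need. Strictly, almost entropic polymatroids inherit adhesivity by a closure argument since the adhesive pairs form a closed cone, so the conclusion extends from entropic to almost entropic $g$; but for the statement of Corollary \ref{corr:5ineq} only genuinely entropic $g$ is asserted, so the bare \cite[Lemma 2]{fmadhe} suffices.

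I do not anticipate a real obstacle: the content is entirely in Corollary \ref{corr:3-adhesive}, and this corollary is a one-line translation via the standard restriction-to-adhesive principle. The only thing to be careful about is bookkeeping — verifying that every one of the eleven summands in \eqref{eq:adhesive3} (counting the two-line alternatives) involves only sets contained in $abcx$ or in $abcy$, so that replacing $f_x,f_y$ by a common extension $g$ leaves each summand unchanged, and that the family of permutations ``permuting $a,b,c$ and $x,y$'' is the same family in both corollaries. Both checks are immediate from the explicit form of the expression.
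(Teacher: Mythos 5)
Your proposal is correct and follows exactly the paper's own argument: restrictions of an entropic polymatroid to $abcx$ and $abcy$ are adhesive by \cite[Lemma 2]{fmadhe}, so the inequalities of Corollary \ref{corr:3-adhesive} apply and coincide term by term with the displayed inequality. (The only slip is calling $U=abcx$ and $V=abcy$ ``disjoint''---they overlap in $abc$, which is of course the whole point---but this does not affect the argument.)
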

\begin{proof}
As observed in \cite{fmadhe}, restrictions of an entropic polymatroid are
adhesive, consequently the inequalities (\ref{eq:adhesive3}) in Corollary
\ref{corr:3-adhesive} must hold.
\end{proof}

\subsection{Examples}\label{subsec:examples}

\begin{example}\label{example:ex1}
There are linearly representable polymatroids $f_x$ and $f_y$ on $abcx$ and
$abcy$ which have an amalgam but are not adhesive.
\end{example}
\begin{proof}
Polymatroids $f_x$ and $f_y$ will be extensions of the uniform polymatroid
$$
   f(A) = \left\{\begin{array}{ll}
            4 & \mbox{ if $|A|=1$},\\
            6 & \mbox{ otherwise},
          \end{array}\right. ~~~~ A\subseteq \{abc\}.
$$
Clearly $f(i,j)=f(i,j|k)=2$ for
all distinct $i$, $j$, $k$. The excess functions defining $f_x$ and $f_y$ are
$$
 e_x(A)=\left\{\begin{array}{ll} 
            3 & \mbox{ if $A=\emptyset$, }\\
            1 & \mbox{ otherwise},
        \end{array}\right.
~~ \mbox{ and } ~~
 e_y(A)=\left\{\begin{array}{ll} 
            3 & \mbox{ if $A=\emptyset$ or $A=\{c\}$,}\\
            1 & \mbox{ otherwise}.
        \end{array}\right.
$$
\newcommand\mf[1]{\mathbf{#1}}%
By Examples \ref{example:excess-1} and \ref{example:excess-2} both
$f_x$ and $f_y$ are polymatroids. They are {\em not} adhesive, as all terms on the
left hand side of (\ref{eq:adhesive3}) are zero, while $f_x(a,b)=f(a,b)=2$.
To show that they have an amalgam, one can check that all conditions of
Theorem \ref{thm:amalgam3} hold. The polymatroid $f_{xy}$ specified in 
Table \ref{table:ex1} gives such an extension explicitly.
\begin{table}[!htb]\begin{center}\begin{tabular}{|ccc@{\qquad}ccc@{\qquad}ccc@{\qquad}ccc|}
\hline\multicolumn{3}{|c@{\qquad}}{\rule{0pt}{12pt}$A$}&
      \multicolumn{3}{@{}c@{\qquad}}{$Ax$ }&
      \multicolumn{3}{@{}c@{\qquad}}{$Ay$ }&
      \multicolumn{3}{@{}c|}{$Axy$ }\\[2pt]
\hline\rule{0pt}{13pt}
    & 6 &   &    & 7 &   &    & 7 &   &    & 7 &     \\[2pt]
~ 6 & 6 & 6 &  7 & 7 & 7 &  7 & 7 & 7 &  7 & 7 & 7~  \\[2pt]
~ 4 & 4 & 4 &  5 & 5 & 5 &  5 & 5 & 7 &  6 & 6 & 7~  \\[2pt]
    & 0 &   &    & 3 &   &    & 3 &   &    & 5 &     \\[2pt]
\hline
\end{tabular}\end{center}\vskip -5pt
\caption{The polymatroid $f_{xy}$ for $A\subseteq \{abc\}$}\label{table:ex1}
\end{table}
The four groups contain the values for the subsets indicated at the top line
where $A$ runs over all subsets of $abc$. The values are arranged
in four lines (from bottom to top) for $A=\emptyset$, one-element
subsets $a$, $b$, $c$, two-element subsets $ab$, $ac$, $bc$, and $abc$ at
the top.

Finally, the polymatroids $f_x$ and $f_y$ are linearly representable over
any field. Choose seven independent vectors $\mf s_1$, $\mf s_2$, $\mf u_1$,
$\mf u_2$, $\mf v_1$, $\mf v_2$, and $\mf r$. Subspaces assigned to the
ground elements are are the ones spanned by the vectors listed below:
$$\begin{array}{rl}
\multicolumn{2}{c}{\mbox{$abcx$ is linear}}\\[3pt]
a: & \mf s_1,\mf s_2, \mf u_1,\mf u_2 \\[2pt]
b: & \mf s_1, \mf s_2, \mf v_1, \mf v_2\\[2pt]
c: & \mf s_1,\mf s_2,\mf u_1+\mf v_1, \mf u_2+\mf v_2 \\[2pt]
x: & \mf s_1,\mf s_2,\mf r
\end{array}
\qquad
\begin{array}{rl}
\multicolumn{2}{c}{\mbox{$abcy$ is linear}}\\[3pt]
a: & \mf s_1,\mf s_2, \mf u_1,\mf u_2 \\[2pt]
b: & \mf s_1, \mf s_2, \mf v_1, \mf v_2\\[2pt]
c: & \mf u_1,\mf u_2,\mf v_1, \mf v_2 \\[2pt]
y: & \mf s_1,\mf s_2,\mf r
\end{array}
$$
It is easy to check that all generated subspaces have the right dimension.
Note that while the dimensions of the subspaces corresponding to subsets of 
$abc$ are the same, the subspace arrangements are {\em not} isomorphic.
\end{proof}

It is easy to check that $f_x\down x$ and $f_y\down y$ are also linearly
representable. As $f_x$ and $f_y$ are not adhesive, according to Theorem
\ref{thm:amalgam-Xy}, $f_x\down x$ and $f_y\down y$ have no amalgam.

\medskip

Theorem \ref{thm:amalgam3} can be used to characterizes 1-sticky polymatroids on
three-element sets. The following examples show some particular cases.

\begin{example}\label{example:nonsticky-1}
Let $f$ be a polymatroid on $\{abc\}$. If $f(a,b)$,
$(a,b|c)$ are positive, $(a,b)\le (a,c)$, $(a,b)\le (b,c)$,
then $f$ is not 1-sticky.
\end{example}
\begin{proof}
We specify two extensions $f_x$ and $f_y$ so that one of the
inequalities in Theorem \ref{thm:amalgam3} fails.
Let $t=(a,b)>0$, and $u=\min\{ (a,b),~ (a,b|c) \}>0$.
Define the excess functions $e_x$, $e_y$ by
$$
e_x(A) = \left\{
\begin{array}{rl}
    t & \mbox{ if $A=\emptyset$,}\\
    0 & \mbox{ otherwise},
\end{array}\right.
~~~\mbox{ and } ~~~
e_y(A)=\left\{
\begin{array}{rl}
    u & \mbox{ if $A=\emptyset$ or $A=\{c\}$,} \\
    0 & \mbox{ otherwise}.
\end{array}\right.
$$
According to Examples \ref{example:excess-1} and \ref{example:excess-2},
$f_x$ and $f_y$ are polymatroids. In this case
$f_x(a,x|c)=f_x(b,x|ac)=f_x(c,x|ab)=0$, $f_x(x|abc)=0$, and
$f_x(a,b|x)=f(a,b)-t=0$, see the remark following Example
\ref{example:excess-1}. Similarly, we have 
$f_y(a,b|y)=f(a,b)-u=t-u$, $f_y(c,y)=0$, thus the
left hand side of the top line in (\ref{eq:amalgam3}) is
\begin{align*}
  & f_x(a,x|c)+(f_x(a,b|x)+f_y(a,b|y)+f_y(c,y)+{}\\
  +& f_x(b,x|ac)+f_x(c,x|ab)+2f_x(x|abc) = t-u,
\end{align*}
while the right hand side is $f(a,b)=t$. Thus no amalgam of $f_x$ and $f_y$
exists.
\end{proof}

\begin{example}\label{example:is-sticky}
Suppose $(a|bc)=(b|ac)=(c|ab)=0$, and at least one of $(a,b|c)$,
$(a,c|b)$, $(b,c|a)$ is zero. Then $f$ is 1-sticky.
\end{example}
\begin{proof}
Let $f_x$ and $f_y$ be two extensions of $f$. Our goal is to show that all
instances of the inequalities in Theorem \ref{thm:amalgam3} hold. From the
assumptions it follows that for $|A|\ge 2$ we have $f(A)=f(abc)=t$; moreover
at least one of $f(a)$, $f(b)$, $f(c)$ also equals $t$. Suppose
$f_x$ and $f_y$ are specified by the excess functions $e_x$ and
$e_y$. By Proposition \ref{prop:tight-XY} we can assume that $f_x$ is tight
on $x$ and $f_y$ is tight on $y$, which gives $e_x(M)=e_y(M)=0$, where
$M=\{abc\}$. In our case
$f(i|M{-}i)=0$, thus we must also have $e_x(i|M{-}i)=e_y(i|M{-}i)=0$, thus
$e_x(A)=e_y(A)=0$ for all two-element subsets of $M$. This means
$$
   f_x(i,x|M{-}i) = f_x(x|M)= 0, ~~~ f_y(i,y|M{-}i)=f_y(y|M)=0,
$$
thus all terms in the second line of (\ref{eq:amalgam3}) are zero.
Consequently we only need to
show that
$$
  f_x(a,x|c)+f_x(a,b|x)+f_y(a,b|y)+f_y(c,y)\ge f(a,b),
$$
which rewrites to
\begin{equation}\label{eq:is-sticky}
  f(a,b)+e_x(a)+e_x(b)+e_x(c)-e(x) + e_y(a)+e_y(b)-e_y(c)\ge 0.
\end{equation}
The condition that one of $f(a)$, $f(b)$, $f(c)$ equals $t$ was not used
yet.
If $f(c)=t$, then $e_x(c)=e_y(c)=0$, and then (\ref{eq:is-sticky})
follows from
$$
    f(a,b)+e_x(a,b)+e_y(a)+e_y(b) \ge 0,
$$
which holds by Lemma \ref{lemma:excess}, Condition 3. When $f(a)=t$ (or,
symmetrically, $f(b)=t$), then $e_x(a)=e_y(a)=0$,
$f(a,b)=f(b)=f(b,c)+f(a,b|c)$, and (\ref{eq:is-sticky}) rewrites to
$$
   f(b,c)+e_x(b,c)+f(a,b|c)+e_y(a,b|c)+e_y(b)\ge 0,
$$
which, again, holds by Lemma \ref{lemma:excess}.
\end{proof}

\section{Two-element extensions}\label{sec:sticky-3}

Using similar techniques necessary and sufficient conditions for the
existence of an amalgam of polymatroids on larger sets can be obtained.
Theorem \ref{thm:sticky-2-1} is
such an example. It is a consequence of \cite[Remark 6]{fmadhe} and Theorem
\ref{thm:amalgam-Xy}; we sketch a direct proof. The result is used to
get a characterization of 2-sticky polymatroids on two-element sets.

\begin{theorem}\label{thm:sticky-2-1}
Polymatroids $f_X$ and $f_y$ on $abx_1x_2$ and $aby$, respectively, have an amalgam if and
only if the following two inequalities and all of their permutations (permuting
$a$ and $b$, and $x_1$ and $x_2$) hold choosing either the top or the bottom
line from the list in curly brackets:
$$
   \twoline{f_X[a,b,x_1,x_2]}{f_X[a,x_1,b,x_2]}
   + f_y(y,a|b)+f_y(y,b|a)+f_y(a,b|y)
 +3f_y(y|ab)\ge 0.
$$
\end{theorem}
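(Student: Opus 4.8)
The plan is to follow the same two-sided strategy used in the proof of Theorem~\ref{thm:amalgam3}: necessity is checked by exhibiting the displayed expression as a consequence of the basic Shannon inequalities on the five-element ground set $abx_1x_2y$, and sufficiency is obtained by writing the facets of the projection of $\Gamma_N$ ($N=abx_1x_2y$) onto the coordinates carried by $f_X$ and $f_y$ as non-negative combinations of the basic submodular and monotonicity inequalities in which the dropped coordinates cancel.

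\smallskip

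\emph{Necessity.} First I would expand the left-hand side. Recall
$f_X[a,b,x_1,x_2]=-(a,b)+(a,b|x_1)+(a,b|x_2)+(x_1,x_2)$, and similarly for the symmetric term, where all these are evaluated in~$f_X$; and $f_y(y,a|b)$, $f_y(y,b|a)$, $f_y(a,b|y)$, $f_y(y|ab)$ are evaluated in $f_y$. If $f$ is any polymatroid on $abx_1x_2y$ restricting to $f_X$ on $abx_1x_2$ and to $f_y$ on $aby$, then every one of the listed terms is a well-defined functional of $f$, and the assertion is that the sum is $\ge 0$ for all such $f$. This is a single linear inequality in the $31$ coordinates of a polymatroid on five elements, so it is either valid for all of $\Gamma_N$ or not; I would confirm validity by rewriting the sum — exactly as in the passage from \eqref{eq:amalgam3} to \eqref{eq:amalgamproof} — as an explicit non-negative integer combination of basic inequalities $(i,j\mid K)\ge 0$ and $(i\mid N{-}i)\ge 0$ on $N$, i.e.\ verify it with an automated entropy checker and then record the certifying combination. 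Since the terms only involve subsets of $abx_1x_2$ or of $aby$, validity on $\Gamma_N$ forces the inequality to hold whenever an amalgam exists.

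\smallskip

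\emph{Sufficiency.} Here I would invoke the linear-programming duality argument of Section~\ref{subsec:adhesive}. The amalgam problem asks whether the point $(f_X,f_y)$ lies in the coordinatewise projection $\pi(\Gamma_N)$, where $\pi$ forgets the coordinates $I\subseteq N$ with $\{x_1,x_2\}\subseteq I$ (the coordinates $Ax_1x_2$ and $Ax_1x_2y$, $A\subseteq ab$), $8$ of them as in Theorem~\ref{thm:amalgam3}. The facets of $\pi(\Gamma_N)$ are exactly the extremal non-negative combinations of rows of $\mathcal M_N$ whose restriction to those $8$ columns is zero. I would form the submatrix of $\mathcal M_N$ on those columns, run the same computer enumeration of extremal zero-combinations, discard those that are already consequences of the basic inequalities for $abx_1x_2y$ and for $aby$ separately (these give supporting hyperplanes that do not cut into $\Gamma_{abx_1x_2y}$ or $\Gamma_{aby}$, hence are redundant), and read off the surviving facets together with their permutation orbit under swapping $a\leftrightarrow b$ and $x_1\leftrightarrow x_2$. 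The displayed inequality and its permutations should be precisely this list, which then certifies that satisfying them is sufficient for membership in $\pi(\Gamma_N)$. As noted in the excerpt, one may alternatively derive the statement from \cite[Remark~6]{fmadhe} combined with Theorem~\ref{thm:amalgam-Xy}: tightening $f_y$ on $y$ turns adhesivity into amalgamability and kills the $f_y(y|ab)$ term, matching the structure of the Ingleton-type expression in Mat\'u\v s's remark; I would include this as the conceptual shortcut and use the LP computation only to pin down the exact coefficients.

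\smallskip

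\emph{Main obstacle.} The hard part is the sufficiency direction: ensuring the computer enumeration of extremal zero-combinations of the relevant rows of $\mathcal M_N$ is complete, and correctly separating genuine facets of the projection from the redundant supporting hyperplanes that are already implied by submodularity on the two pieces. This is the same delicate bookkeeping that underlies Theorem~\ref{thm:amalgam3}, only with the $x$-side now carrying two elements, so the list of rows and the symmetry orbit are larger; verifying that nothing has been missed is where the real work lies.
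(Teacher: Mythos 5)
Your overall strategy is the same as the paper's: necessity by exhibiting the expression as an explicit non-negative combination of basic Shannon inequalities on $N=abx_1x_2y$ (the paper records exactly such a certificate), and sufficiency by enumerating the extremal non-negative combinations of rows of $\mathcal M_N$ that vanish on the projected-out columns and then sorting the resulting supporting hyperplanes into genuine facets of the projection versus consequences of submodularity on the two pieces. The paper even mentions the same conceptual shortcut via \cite[Remark 6]{fmadhe} together with Theorem~\ref{thm:amalgam-Xy}.

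There is, however, one concrete error that would derail the computation as you describe it: you project out the wrong coordinates. The coordinates carried by the pair $(f_X,f_y)$ are the subsets of $abx_1x_2$ together with the subsets of $aby$, so the dropped coordinates are the sets containing $y$ \emph{and} at least one of $x_1,x_2$ --- namely the twelve sets $Ax_1y$, $Ax_2y$, $Ax_1x_2y$ with $A\subseteq\{a,b\}$ --- not the eight sets containing both $x_1$ and $x_2$. The sets $Ax_1x_2$ are subsets of $abx_1x_2$ and are already determined by $f_X$, while the sets $Ax_iy$ with a single $x_i$ are genuinely unknown and must cancel in every dual certificate. Running the enumeration against your eight columns would compute the projection onto a different subspace and produce an incorrect facet list; on the correct twelve columns the relevant submatrix of $\mathcal M_N$ has 48 distinct non-zero rows and 6938 extremal zero-combinations. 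Once this is corrected, your plan coincides with the paper's proof.
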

\begin{proof}
After expanding and rearranging the above inequalities are equivalent to
\begin{align*}
&~~~\twoline{(x_2,y|b)+(x_1,x_2|y)+(a,b|x_2y)+(a,y|x_1x_2)}
            {(x_2,b|y)+(x_2,y|x_1)+(a,y|bx_2)+(a,x_1|x_2y)}+{}\\
&+ (x_1,y|a)+(x_2,y|a)+(x_1,y|b)+(a,b|x_1y)+{}\\
&+(y|abx_1)+(y|abx_2)+(y|ax_1x_2) \ge 0,
\end{align*}
thus if $f_X$ and $f_y$ have an amalgam, then the expressions must be
non-negative.

The sufficiency can be checked similarly as in Theorem \ref{thm:amalgam3} by
computing the facets of the projection of the cone $\Gamma_{\{abx_1x_2y\}}$ 
\begin{table}[htb]
\begin{center}\begin{tabular}{|ccc|ccc|ccc|ccc|l|}
\hline
\rule{0pt}{11pt}$x_1y$&$x_2y$&$x_1x_2y$&\multicolumn{3}{|c}{$a$}&\multicolumn{3}{|c}{$b$}&\multicolumn{3}{|c|}{$ab$}&\\
\hline
\rule{0pt}{12pt}
  1&0&0& -1&0&0& 0&0&0& 0&0&0  &$(a,x_1|y)$, $(a,y|x_1)$\\[2pt]
-1&0&0& 1&0&0& 1&0&0& -1&0&0   &$(a,b|x_1y)$\\[2pt]
 1&1&-1&0&0&0& 0&0&0&  0&0&0   &$(x_1,x_2|y)$\\[2pt]
 0&-1&0& 0&1&0& 0&1&0& 0&-1&0  &$(a,b|x_2y)$\\[2pt]
 0&0&1&  0&0&-1&0&0&0& 0&0&0   &$(a,y|x_1x_2)$\\[2pt]
 0&0&0&-1&0&0& 0&0&0& 0&0&0    &$(x_1,y|a)$\\[2pt]
 0&0&0& 0&-1&0&0&0&0& 0&0&0    &$(x_2,y|a)$\\[2pt]
 0&0&0& 0&0&1 &0&0&0& 0&0&-1   &$(b,y|ax_1x_2)$\\[2pt]
 0&0&0& 0&0&0&-1&0&0& 0&0&0    &$(x_1,y|b)$\\[2pt]
 0&0&0& 0&0&0& 0&-1&0&0&0&0    &$(x_2,y|b)$\\[2pt]
 0&0&0& 0&0&0& 0&0&0& 1&0&-1   &$(x_2,y|abx_1)$\\[2pt]
 0&0&0& 0&0&0& 0&0&0& 0&1&-1   &$(x_1,y|abx_2)$\\[2pt]
 0&0&0& 0&0&0& 0&0&0& 0&0&1    &$(y|abx_1x_2)$\\[2pt]
\hline
\end{tabular}\end{center}\vskip -5pt
\caption{A submatrix of $\mathcal M_{\{abuxy\}}$}\label{table:proj-facets}
\end{table}
to the coordinates which are subsets of $abx_1x_2$ and $aby$. There are 12
dropped coordinates: $x_1y$, $x_2y$, $x_1x_2y$, \dots, $abx_1y$, $abx_2y$,
$abx_1x_2y$. Restricting the matrix $\mathcal M$ describing the facets of 
the cone 
$\Gamma_{\{abx_1x_2y\}}$ to these columns, one gets the submatrix $\mathcal
M'$ with 48 rows and 12 columns. Some of the rows are shown in Table \ref{table:proj-facets}.
Software Porta \cite{porta} found 6938 extremal non-negative linear
combinations giving zero sums for the 12 projected variables. 
One facet of the projection is generated by the linear combination taking
all but the first and last rows from Table \ref{table:proj-facets} once, 
and the last row three times. As in case of Theorem \ref{thm:amalgam3} all
extremal combinations were expanded to bounding hyperplanes and checked
whether it was a facet of the projection.
This search confirmed the claim.
\end{proof}

\begin{theorem}\label{thm:nonsticky-2}
The polymatroid on the two-element set $ab$ is 2-sticky if and only if
one of the following cases hold: $(a,b)=0$ (it is modular); $(a|b)=0$, or
$(b|a)=0$ (one of them determines the other).
\end{theorem}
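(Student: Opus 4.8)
Write a polymatroid on $\{a,b\}$ via the three nonnegative numbers $s=f(a,b)$, $p=f(a\mid b)$, $q=f(b\mid a)$, so that $f=s\,\r_{ab}+p\,\r_a+q\,\r_b$ (this is the tight-plus-modular decomposition of Lemma \ref{lemma:tightening}), i.e.\ $f(a)=s+p$, $f(b)=s+q$, $f(ab)=s+p+q$; the three exceptional cases of the statement are exactly $s=0$, $p=0$, $q=0$.

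\emph{The case $s=0$, $p=0$ or $q=0$.} If $s=0$ the polymatroid is modular, hence sticky (Section \ref{subsec:adhesive}), hence $2$-sticky. Suppose $p=0$ (the case $q=0$ is symmetric), that is, $a$ lies below $b$; then submodularity forces $f_X(a\mid bA)=0$ for every $A$ in any extension $f_X$. By Theorem \ref{thm:1-1-sticky} only pairs $(f_X,f_Y)$ with $|X|=2$ and $|Y|\le2$ remain. For $|Y|=1$ I would invoke Theorem \ref{thm:sticky-2-1}: the terms there built from $f_y$ are nonnegative, while substituting the identities $f_X(ab)=f_X(b)$, $f_X(abx_i)=f_X(bx_i)$, $f_X(abx_1x_2)=f_X(bx_1x_2)$ turns the two Ingleton expressions into nonnegative combinations of basic inequalities — for instance $f_X[a,b,x_1,x_2]=-f_X(a)+f_X(ax_1)+f_X(ax_2)-f_X(x_1x_2)\ge0$ by submodularity of $\{ax_1,ax_2\}$ and monotonicity, and $f_X[a,x_1,b,x_2]\ge f_X(x_2\mid x_1)-f_X(x_2\mid ax_1)\ge0$, using $f_X(a,x_1\mid b)=0$ and $f_X(b,x_2)\ge f_X(a,x_2)$ (both consequences of $a$ lying below $b$), and similarly for the other permutations. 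For $|X|=|Y|=2$ I would glue along $b$: amalgamate $f_X\restr bX$ and $f_Y\restr bY$ — extensions of the one-element, hence modular and sticky, polymatroid $f\restr b$ — and then re-attach $a$ below $b$ by the construction of Lemma \ref{lemma:excess}; since the excess function of $a$ vanishes on every set containing $b$, it is prescribed on $X\cup Y$ by its restrictions to $X$ and to $Y$, and checking that these fit together into a legitimate excess function on all subsets meeting both $X$ and $Y$ is, I expect, the technical crux of the whole proof.

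\emph{The case $s,p,q$ all positive.} I will exhibit, already with $|X|=2$ and $|Y|=1$, extensions with no amalgam, so that Theorem \ref{thm:sticky-2-1} suffices. Take for $f_y$ on $\{a,b,y\}$ the common-information extension $f_y(y)=s$, $f_y(ay)=s+p$, $f_y(by)=s+q$, $f_y(aby)=s+p+q$; it is a polymatroid, and every $f_y$-term of Theorem \ref{thm:sticky-2-1} vanishes, since $f_y(y,a\mid b)=f_y(y,b\mid a)=f_y(a,b\mid y)=f_y(y\mid ab)=0$. For $f_X$ on $\{a,b,x_1,x_2\}$ fix $\eps=t$ with $0<\eps\le\min\{s,p,q\}$ and take the Vámos-type values $f_X(a)=s+p$, $f_X(b)=s+q$, $f_X(x_i)=s+t$; $f_X(ab)=s+p+q$, $f_X(ax_i)=s+p+t$, $f_X(bx_i)=s+q+t$, $f_X(x_1x_2)=s+2t+\eps$; $f_X(abx_i)=s+p+q+t$, $f_X(ax_1x_2)=s+p+2t$, $f_X(bx_1x_2)=s+q+2t$; and $f_X(abx_1x_2)=s+p+q+2t-\eps$. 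Then $f_X$ restricts to $f$ on $\{a,b\}$, one has $f_X(a,b\mid x_1)=f_X(a,b\mid x_2)=0$ and $f_X(x_1,x_2)=s-\eps$, so $f_X[a,b,x_1,x_2]=-\eps<0$; hence the instance of the inequality in Theorem \ref{thm:sticky-2-1} obtained by taking the upper entry of the brace reads $-\eps\ge0$, which is false, so $f_X$ and $f_y$ have no amalgam and $f$ is not $2$-sticky.

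What remains in the last step is the routine verification that the displayed $f_X$ is a polymatroid, i.e.\ that all basic submodular inequalities $f_X(i,j\mid K)\ge0$ and all monotonicity requirements hold; the constraint $\eps=t\le\min\{s,p,q\}$ is exactly what is needed to keep the delicate ones — $f_X(a,b\mid x_1x_2)$, $f_X(x_1,x_2\mid ab)$, $f_X(a\mid bx_1x_2)$, $f_X(x_1\mid abx_2)$ and their relatives — nonnegative, and this is the one place on the ``only if'' side where a little care with the parameters is required.
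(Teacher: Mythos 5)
Your ``only if'' half is correct and is essentially the paper's argument: the one-point extension $f_y$ you take is exactly the one the paper defines via the excess function $e_y(\emptyset)=f(a,b)$, and your explicit four-element $f_X$ (which I checked: with $\eps=t\le\min\{s,p,q\}$ all $24$ submodularities and the four monotonicity facets hold, it restricts to $f$, and $f_X[a,b,x_1,x_2]=-\eps$) plays the same role as the paper's extension built in the natural coordinate system of \cite{entreg}. Either construction feeds into Theorem \ref{thm:sticky-2-1} in the same way.

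The ``if'' half has a genuine gap. Your treatment of $|X|=2$, $|Y|=1$ via Theorem \ref{thm:sticky-2-1} is fine (the $f_y$-terms are Shannon-nonnegative, and the Ingleton expressions in the braces become nonnegative under $f(a|b)=0$; you should still record the $[b,x_i,a,x_j]$ instances separately, since the hypothesis is not symmetric in $a$ and $b$). But $2$-sticky requires handling $|X|=|Y|=2$ as well, and there you only propose to amalgamate the restrictions to $bX$ and $bY$ and then ``re-attach $a$ below $b$'' via Lemma \ref{lemma:excess} --- and you yourself flag the verification that the two excess functions extend to a single legitimate excess function on all of $bXY$ as ``the technical crux'' without carrying it out. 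That step is not routine: the excess must be defined on sets meeting both $X$ and $Y$ but not containing $b$, must be non-increasing, and must satisfy $e(i,j|A)+g(i,j|A)\ge 0$ for an amalgam $g$ you do not control; no candidate (min, truncation, etc.) is exhibited or checked. The paper avoids this entirely by a different route: it uses the identities such as $[a,b,x_1,x_2]+(a|b)=(a,x_1|b)+(a,x_2|b)+(x_1,x_2|a)+(a|x_1x_2)$ to show that \emph{all six} Ingleton expressions of any extension on $abx_1x_2$ are nonnegative, hence both extensions are linear; then Proposition \ref{prop:pm-approx} lets one replace them by linearly representable polymatroids over a common field with matching dimensions on $a$, $b$, $ab$, and gluing the two subspace arrangements along a common basis yields an amalgam (indeed an adhesive extension). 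Either adopt that linearity argument or actually construct and verify the excess function; as written, the sufficiency direction is incomplete.
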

\begin{proof}
First we show that these polymatroids are 2-2-sticky. Modular polymatroids are sticky
without any restriction, so suppose, e.g., that $f(a|b)=0$. Let $f_X$ be an
extension on $abx_1x_2$.
All six
Ingleton expressions for $f_X$ are non-negative using the following
equalities and their symmetric versions:
\begin{align*}
   [a,b,x_1,x_2] &+(a|b)=(a,x_1|b)+(a,x_2|b)+(x_1,x_2|a)+(a|x_1x_2);\\
   [a,x_1,b,x_2] &+(a|b)=(a,x_1|b)+(b,x_2|a)+(a,x_2|x_1)+(a|bx_2);\\
   [b,x_1,a,x_2]
   &+(a|b)=(a,x_1|b)+(a,x_2|b)+(a,x_1|x_1)+(b,x_1|ax_2)+(a|bx_1x_2);\\
   [x_1,x_2,a,b]
   &+(a|b)=(a,x_1|b)+(a,x_2|x_1)+(a,b|x_2)+(x_1,x_2|ab)+(a|bx_1x_2).
\end{align*}
It means that $f_X$ is linear, and the same is true for $f_Y$. As in the
proof of Theorem \ref{thm:1-1-sticky}, using Proposition \ref{prop:pm-approx}
we may assume that $f_X$ and $f_Y$ are linearly representable over the same
field, and the
dimensions of the subspaces corresponding to the common subsets $a$, $b$
and $ab$ are the same in both representations. 
Choose maximal independent set of vectors in both representations
which span these subspaces in an equivalent way. Extend this set to be a
base in both representations. Glue the 
two vector spaces together along the equivalent set of base vectors. 
This gives an amalgam (even an adhesive extension) of $f_X$ and $f_Y$, as
required.

In the other direction first we show that given $f$ with $f(a,b)>0$, $f(a|b)>0$,
$f(b|a)>0$, it can be extended to a polymatroid $f_X$ on $abx_1x_2$
so that $f_X[a,b,x_1,x_2]<0$. For the construction we recall the {\em 
natural coordinates} of polymatroids on four elements from \cite{entreg}. 
This coordinate system has the additional advantage
that points with natural coordinates in the non-negative orthant $\R_{\ge 0}^{15}$
are polymatroids.
Let us recall these coordinates below:
\begin{align*}
   & -[a,b,x_1,x_2], \\
   & (a,b|x_1),~ (a,b|x_2), ~ (a,x_1|b), ~ (b,x_1|a), ~ (a,x_2|b), ~ (b,x_2|a),\\
   & (x_1,x_2|a), ~ (x_1,x_2|b), ~ (x_1,x_2), ~(a,b|x_1x_2), \\
   & (a|bx_1x_2), ~ (b|ax_1x_2), ~ (x_1|abx_2), (x_2|abx_1).
\end{align*}
From these coordinates the values $f_X(a)$, $f_X(b)$, and $f_X(ab)$ can be 
expressed as
follows, where only the coefficients of the coordinates in the above order
are shown:
\def\sb#1{\mathbf #1}
\begin{align*}
   f_X(a) &= (\sb2,~ 1,1,1,0,1,0,~ \sb1,\sb0,\sb1,1,~ 1,0,0,0)\\
   f_X(b) &= (\sb2,~ 1,1,0,1,0,1,~ \sb0,\sb1,\sb1,1,~ 0,1,0,0)\\
  f_X(ab) &= (\sb3,~ 1,1,1,1,1,1,~ \sb1,\sb1,\sb1,2,~ 1,1,0,0).
\end{align*}
Choose the coordinates first and from eighth to tenth (typeset in bold)
to have the positive values $\eps$, $f(a|b)-\eps$, $f(b|a)-\eps$, and 
$f(a,b)-\eps$, respectively, for some small enough $\eps$; set all other
coordinates to zero. With this choice $f_X$ will be a polymatroid which
extends the one given on $ab$ as, e.g.,
$f_X(a)=2\eps+f(a,b)-\eps+f(a|b)-\eps=f(a)$,
moreover the Ingleton value $f_X[a,b,x_1,x_2]$, as given by the first coordinate, 
is $-\eps$, which is negative.

Define the other extension $f_y$ by the excess function
$$
    e_y(A) = \left\{\begin{array}{ll}
               f(a,b) & \mbox{ if $A=\emptyset$,}\\
               0     & \mbox{ otherwise}.
             \end{array}\right.
$$
By the remark after Example \ref{example:excess-1}, $f_y$ is a polymatroid
and $f_y(a,b|y)=f_y(a,y|b)=f_y(b,y|a)=0$ as well as $f_y(y|ab)=0$.
According to Theorem \ref{thm:sticky-2-1} if $f_X$ and $f_y$ have an amalgam, 
they must satisfy
$$
   f_X[a,b,x_1,x_2] + f_y(a,b|y)+f_y(a,y|b)+f_y(b,y|a)+3f_y(y|ab)\ge 0.
$$
This value, however, is $-\eps<0$, which proves the theorem.
\end{proof}

\section*{Acknowledgment}

I am indebted to the reviewers for their valuable work.
The careful and thorough reports helped to improve the
presentation, clarify terminology, and made the paper
more accessible to the interested readers.

The author would like to thank the generous support of 
the Institute of Information Theory and Automation of the CAS, Prague.
The research reported in this paper was supported by GACR project
number 19-04579S, and partially by the Lend\"ulet program of the HAS.

\makecontacts

\end{document}